\documentclass[runningheads]{llncs}

\newcommand{\ignore}[1]{}
\newcommand{\angl}[1]{\langle #1 \rangle}
\newcommand{\sem}[1]{{[\![ #1 ]\!]}}
\newcommand{\RR}{\mathbb{R}}
\newcommand{\compclass}[1]{\textsf{#1}}
\newcommand{\NCone}{\compclass{NC\textsuperscript{1}}}
\usepackage{listings}
\usepackage{refcount}
\usepackage{algorithm,algorithmic}
\usepackage{amsfonts}
\usepackage{amsmath}
\usepackage[numbers,sort&compress]{natbib}
\usepackage[T1]{fontenc}
\usepackage{graphicx}
\usepackage{hyperref}

\begin{document}
\title{Max-Policy Iteration, Revisited}
%
%
\author{David Monniaux\inst{1}\orcidID{0000-0001-7671-6126} \and
Helmut Seidl\inst{2}\orcidID{0000-0002-2135-1593}}
\authorrunning{D.\ Monniaux and H.\ Seidl}
%
\institute{CNRS/VERIMAG, Universit{\'e} Grenoble Alpes,
38401 Saint Martin d’Hères, France \\
\email{david.monniaux@univ-grenoble-alpes.fr} \\
\url{https://www-verimag.imag.fr/~monniaux/}
\and
School of CIT, TU Munich, 85748 Garching, Germany\\
\email{helmut.seidl@tum.de}\\
\url{https://www.cs.cit.tum.de/en/pl/personen/helmut-seidl/}}
\maketitle              
\begin{abstract}
\emph{Max}-policy iteration is an approach to computing precise numeric program invariants by successive attempts at resolving maximum operators and reduction to mathematical optimization.
Mathematical optimization, though, may be expensive.
Here, we show, for \emph{max}-policy iteration on systems of equations over integers as well as over floating point numbers, that mathematical optimization can be replaced by plain value iteration --- which is still guaranteed to terminate. 
As an application, a precise bound analysis for integer or floating point variables is obtained, avoiding widening operators altogether.

We also consider \emph{max}-policy iteration over the rational numbers, where the right-hand sides are maxima of minima of affine combinations of unknowns. 
We propose \emph{min}-policy iteration as an alternative to linear programming for solving the optimization problems posed by \emph{max}-policy iteration.
We prove that \emph{max-min} policy iteration is guaranteed to return the least solution for bounded systems.
We also show how to extend this algorithm to unbounded systems, and how to construct certificates of soundness as well as of optimality of the computed results.

\keywords{policy iteration \and interval analysis \and certificates}
\end{abstract}
\section{Introduction}\label{sec:intro}
%
Program analysis by abstract interpretation aims at inferring invariants (numeric or otherwise) 
inside restricted classes of invariants known as \emph{abstract domains}.
The most common approach is to compute an ascending sequence of invariant candidates until one is inductive (stable by ``next state transition''), an approach known as \emph{value iteration} in other fields (e.g., in the analysis of probabilistic games).
Such ascending sequences may fail to converge within finite time, or may converge, but not within tolerable finite time.
A generic remedy for these cases is to apply \emph{widening operators} to accelerate convergence \cite{Cousot77,DBLP:journals/logcom/CousotC92}. 
Widening, however, suppresses two interesting properties of value iteration:
\begin{itemize}
\item 	value iteration, when it converges and \emph{precise} abstract transfer functions are used, 
	provides the most precise invariant in the chosen abstract domain;
\item 	the result of value iteration, assuming monotone transfer functions, 
	is monotone in the preconditions and other assumptions on the program used by the analysis: 
	more precise information thus results in more precise invariants.
\end{itemize}
Lack of monotonicity is a form of \emph{brittleness}: the analyzer may have different behaviors for  
closely related programs even though users would expect it to be able to infer the same invariants; 
it may, counter-intuitively, produce worse invariants if a more precise precondition is given.

In response to these weaknesses, two approaches inspired by game theory were proposed for inferring 
numerical invariants, including intervals of variation: \emph{max}-policy iteration (or \emph{ascending} policy iteration) and \emph{min}-policy iteration (or \emph{descending} policy iteration).%
\footnote{Following game theory terminology, a \emph{policy} may also be known as a \emph{strategy}. 
In game theory, \emph{max} strategies express possible strategies for a player seeking to maximize 
the result of the game, fighting against a player that seeks to minimize it. Policy iteration may be used 
to find the values of games~\cite{Puterman_1994}.}
These techniques are applicable when the invariants to be considered can be expressed by means of 
\emph{linear} orders.
The systems to be solved, consist of equations of the form $x = e$, $x$ an unknown, 
$e$ a right-hand side expression built up from other unknonws of the system and constants by means of operators.
Besides arithmetic, also minimum and maximum operators may be employed
where maximum operators correspond to control-flow joins, and minimum operators to guards and 
relations between constraints.

%
The application of a \emph{max}-policy consists in replacing each occurrence of a maximum operator 
(here denoted by $\vee$) by one of its argument expressions to obtain
a system without maximum operators.
\emph{Max}-policy iteration over the reals
may rely on mathematical optimization (e.g., linear or convex programming solvers)
to solve the simplified systems
\cite{Gawlitza_Seidl_FMSD2013,GawlitzaSAGG12}.
That solution is then used either to \emph{improve} the current \emph{max}-policy or 
identify that a solution to the original system has been found.
Furthermore, when each occurring simplified system has a \emph{unique} solution,
the result is guaranteed to be the \emph{least} solution of the system.

\ignore{
%
%
\emph{Max}-policy iteration, since it is a form of acceleration of ascending iterations, may be combined 
with other forms of abstract interpretation. If it is implemented exactly (no relaxation, no floating-point 
approximation $\ldots$), it computes abstract states that would be reachable by value iteration, thus it can be combined without any loss of precision with value iteration on other domains. Of course, we lose optimality properties 
if it is combined with iteration with widening.
}

The application of a \emph{min}-policy consists in replacing each occurrence of a minimum operator (here denoted as $\wedge$)
with one of its argument expressions
\cite{DBLP:conf/cav/CostanGGMP05,DBLP:conf/esop/GaubertGTZ07,AdjeGG2012}.
Again, the solution of the simplified system is used either to detect termination or
improve the current \emph{min}-policy.
This iteration on \emph{min}-policies will result in a sequence of \emph{over-approximations} 
of the system solution yielding a \emph{descending} sequence of inductive invariants. 
There are classes of well-behaved problems 
where it can be proven to return the least solution~\cite{adje:hal-00940804}.
In general, though, this need not be the case.

\paragraph{Contributions.}
In this paper, we re-consider \emph{max}-policy iteration. 
We identify two practical cases, namely, when the value domain either 
consists of integers or of floating-point numbers, where \emph{max}-policy iteration combined with value 
iteration is guaranteed to terminate (Section \ref{sec:max}). 
This termination result provides us with a technique for inferring numeric program invariants 
which neither resorts to widening nor to mathematical optimization.
Over the integers, the new method is also applicable when current methods fail, e.g., because 
right-hand sides use operations such as integer division.
In case of floating-point numbers, the new method vastly outperforms 
methods based on linear programming. 

Value iteration for the simplified systems encountered during \emph{max}-policy iteration, though,  is not sufficient to solve systems using \emph{exact} arithmetic over the rationals.
%
Assume all right-hand side expressions of equations are
of the form $\bigvee_{i\in I}\bigwedge_{j\in J_i} e_{ij}$ where each $e_{ij}$ is an
affine combination of unknowns.
We propose to combine \emph{max}- and \emph{min}-policy iteration --- 
to reduce solving of the original system to solving a sequence of affine systems of equations
(Section \ref{sec:max-min}.
To deal with unbounded solutions, that basic setup may further be
enhanced with \emph{symbolic} upper bounds (Section \ref{sec:infinity}).
%
%
This algorithm differs from existing approaches in two ways:
\begin{itemize}
\item 	Contrary to \emph{min}-policy iteration, it is guaranteed to provide the \emph{least}  solution of the system.
\item 	Contrary to the known approach for \emph{max}-policy iteration over the rationals, it does not have to solve a succession of large linear programs. 
%
\end{itemize}
We also explain (Section~\ref{sec:witness}) how to provide \emph{easily checkable} witnesses 
for correctness and optimality of the result, and study them from a complexity-theoretical point of view.
%

In the case of programs to be analyzed that compute over floating-point values, 
it is desirable to implement the analyzer itself using floating-point (instead of attempting to 
over-approximate the behavior of the floating-point program using 
exact arithmetic). 
In Section~\ref{sec:floating-point}, we experimentally compare three variants \textbf{Max+X} of 
\emph{max}-policy iteration for
floating-point numbers that differ in the way how the encountered systems without maximum operators are solved:
by linear programming (\textbf X = \textbf{LP}), 
by a variant of \emph{min}-policy iteration (\textbf X = \textbf{Min}), 
or by (descending) value iteration (\textbf X = \textbf{Val}).
We discuss soundness and efficiency of the respective approaches (Section~\ref{sec:experiments}).
Section~\ref{sec:extensions} reports on extensions of our techniques to octagon and 
linear template constraint analysis.
%
%
\ignore{
Our implementation is available at \href{https://dx.doi.org/10.5281/zenodo.18134107}{\textsf{10.5281/zenodo.18134107}}.
}
%



\section{Systems of numerical equations} 
				\label{sec:monotone-systems}
				\label{sec:background}

In this section, we recall approaches for automatically inferring inductive 
invariants of programs and introduce an abstract framework for \emph{max}-policy iteration.
As a running example, we will use the following program for which we want 
to infer an inductive invariant of the form $\lstinline|i| \leq b$ at the loop head:
%
{\small\begin{lstlisting}
          int i=0;
/* 1: */  while (i < N) {
/* 2: */     i = i+1;
/* 3: */  }
/* 4: */
\end{lstlisting}}
where $N$ is some given integer constant. 
\emph{Interval analysis} \cite{Cousot77} determines for each program variable
$i$ at each program point $u$ an interval containing all values possibly attained at $u$ for $i$. 
Following, e.g., \cite{DBLP:conf/cav/CostanGGMP05,GawlitzaS2007}, 
we consider upper bounds and lower bounds of intervals separately.
We introduce for each program point $v$ and each program variable $i$ the unknowns $\angl{v,i^+}$ and $\angl{v,i^-}$ 
to represent an upper bound and \emph{negated} lower bound, respectively, to the values of $i$ 
whenever program point $v$ is reached. 
%
Then right-hand side expressions are provided which formalize how the value of $i$ at program point $v$
may depend on the values of variables at control-flow predecessors of $v$.
The corresponding equation system for our example program is summarized in Figure \ref{fig:intervals}.
Here, 1, 2, 3 and 4 represent the program points 
before the loop head, 
at the beginning and end of the loop body, and at program exit, respectively.
The initialization of \lstinline|i| at program start assigns the value 0 to $i$; likewise, all values of $i$
at the end of the loop body may flow into the variable $i$ at program point 1, 
resulting in the equations for $\angl{1,i^+}$ and $\angl{1,i^-}$ in the first row 
(the operator ``$\vee$'' denoting maximum).
%
The guard $i < N$ in the loop head constrains the values of $i$ by with program point
$2$ of the loop body is reached.
It is interpreted as the minimum (operator ``$\wedge$'') of the upper bound of $i$ before the guard and $N-1$.
%
To ensure that the upper bound for $i$ at program point 2
subsumes the current lower bound for $i$, we introduce the operator $?$.
The operator application $a\,?\,b$ returns $b$, if $a\geq 0$ and $-\infty$ otherwise.
Thus in row 2, the negated lower bound for $i$ at program point 2 is set to $-\infty$,
if the difference between $N-1$ and the lower bound for $i$ at program point 1 is negative.
Row 3 of the table corresponds to the increment of $i$, while
row 4 formalizes the condition of the negated guard onto the value of $i$ at program point 2 for reaching
the endpoint 4 of the program.

\ignore{
The set of values for \lstinline|i| incremented by 1 then is joined with the interval before
loop.

A systematic way of formalizing this idea is to compile the program into a system of equations
with unknowns $\angl{u,i}$, $u$ a program point and $i$ a program variable%
---for detailed discussions see, e.g., \cite{DBLP:conf/cav/CostanGGMP05,GawlitzaS2007}.
For our minimalistic example from the last section, this system may take the form
\[
\begin{array}{lll}
\angl{1,i}	&=&	[0,0]\cup\angl{3,i}	\\
\angl{2,i}	&=&	\angl{1,i}\cap[{-\infty,N-1}] \\
\angl{3,i}	&=&	\angl{2,i}+[1,1]	\\
\angl{4,i}	&=&	\angl{1,i}\cap [N,\infty]
\end{array}
\]
The concrete arithmetic operators are raised to the level of interval, while
the least upper and greatest lower bounds are compiled into union and intersection.

-over the integers as equivalent 
to $\lstinline|i| \leq N-1$, yields $\lstinline|i| \leq b \wedge (N-1)$ where $\wedge$ denotes the minimum.
Incrementing \lstinline|i| then yields $\lstinline|i| \leq (b \wedge (N-1))+1$. An inductive invariant 
$\lstinline|i| \leq b$ at the loop head should thus satisfy $0 \leq b$ and $(b \wedge (N-1))+1 \leq b$. 

Furthermore, if multiple numeric program variables are considered, a \emph{sequence} operator
``$;$'' may be required which is defined by 
$d_1;d_2 = \emptyset$ if $d_1=\emptyset$ and $d_1;d_2 = d_2$ otherwise.

By introducing dedicated unknowns $\angl{u,i^+}$ for the upper bounds of intervals $\angl{u,i}$
and correspondingly unknowns $\angl{u,i^-}$ for the corresponding 
\emph{negated} lower bounds,
the system of interval equations then can be compiled into a system over unknowns for numerical values alone.
}
\begin{figure}[hbt]
\[
\begin{array}[t]{lll}
\angl{1,i^+}	&=&	0\vee\angl{3,i^+}	\\
\angl{2,i^+}	&=&	(\angl{1,i^-} +(N-1))? (\angl{1,i^+}\wedge (N-1)) \\
\angl{3,i^+}	&=&	\angl{2,i^+}+1	\\
\angl{4,i^+}	&=&	(\angl{1,i^+}-N)? \angl{1,i^+}
\end{array}
\quad
\begin{array}[t]{lll}
\angl{1,i^-}	&=&	0\vee\angl{3,i^-}	\\
\angl{2,i^-}	&=&	(\angl{1,i^-} +(N-1))? \angl{1,i^-} \\
\angl{3,i^-}	&=&	\angl{2,i^-}-1	\\
\angl{4,i^-}	&=&	(\angl{1,i^+}-N)? (\angl{1,i^-}\wedge - N)
\end{array}
\]
\caption{\label{fig:intervals}The bounds analysis for the example program.}
\end{figure}
\ignore{
The binary operator ``$?$'' on the other hand, is required to check that intersections of intervals 
are non-empty. It is defined by 
$n_1\,?\,n_2 = {-\infty}$ if $n_1 < 0$ and $n_2$ otherwise.
As for interval equations, we also require a sequence operator $;$ given by
$n_1;\,n_2 = {-\infty}$ if $n_1={-\infty} $ and $n_2$ otherwise.
}
Assuming that $N\geq 0$, the \emph{least} solution of this system is given by 
the table
\[
\begin{array}[t]{|l|r|}
\hline
\angl{1,i^-}	&	0	\\
\angl{1,i^+}	&	N	\\
\hline
\end{array}
\quad
\begin{array}[t]{|l|r|}
\hline
\angl{2,i^-}	&	0	\\
\angl{2,i^+}	&	N-1	\\
\hline
\end{array}
\quad
\begin{array}[t]{|l|r|}
\hline
\angl{3,i^-}	&	-1	\\
\angl{3,i^+}	&	N	\\
\hline
\end{array}
\quad
\begin{array}[t]{|l|r|}
\hline
\angl{4,i^-}	&	-N	\\
\angl{4,i^+}	&	N	\\
\hline
\end{array}
\]
\ignore{
-- from which the least solution of the interval system can be recovered as
\[
\begin{array}[t]{|l|l|}
\hline
\angl{1,i}	&	[0,N]	\\
\angl{2,i}	&	[0,N-1]	\\
\angl{3,i}	&	[1,N]	\\
\angl{4,i}	&	[N,N]	\\
\hline
\end{array}
\]
}
More generally, we consider a linearly ordered value domain $\mathbb D$, which could be, e.g.,
$\mathbb Z$ (the integers)
$\mathbb F$ (the floating point numbers), $\mathbb Q$ (the rationals)
or $\mathbb R$ (the reals). 
For $\mathbb Z$, $\mathbb Q$ and $\mathbb R$ we assume them to be
extended by $-\infty$ and $\infty$ as least and greatest elements, respectively.
All these examples (with the exception of $\mathbb Q$) are
not only linearly ordered, but \emph{complete} lattices.
\footnote{\label{note:qelim}%
The rationals do not form a complete lattice: $\{ x \mid x^2 \leq 2 \}$, e.g., has no rational upper bound. 
This is not an issue here as all arithmetic operations we consider are linear. The least solution
of a system of equations built from these together with maximum and minimum
can be expressed in the first-order theory of linear rational arithmetic. 
By applying quantifier elimination, a quantifier-free formula 
can be obtained defining exactly the desired solution~\cite{DBLP:conf/popl/Monniaux09}, thus proving
that it is indeed rational. 
A wide class of numeric static analysis problems can be solved by quantifier elimination 
\cite{DBLP:conf/sas/Monniaux07}: an approach, however, that does not scale.
%
}
Generally, we denote the ordering relation on the linear order $\mathbb D$ by ``$\leq$''.
Over $\mathbb D$, we consider systems $\Sigma$ of equations of the form
\begin{equation}
x = e_x\qquad(x\in X)
\label{eq:system}
\end{equation}
where $X$ is a finite set of \emph{unknowns}, and for each $x\in X$, the right-hand side $e_x$ is an expression
built up from constants from $\mathbb D$ and unknowns $y\in X$ by means of operators from some set $\Omega$. 
%
Each operator $\omega\in\Omega$ is a \emph{Scott-continuous} 
function ${\mathbb D}^k\to{\mathbb D}$ of some arity $k\geq 1$, i.e.,
commutes with least upper bounds of \emph{directed} subsets\footnote{
A subset $D$ is \emph{directed}, if for each $a,b\in D$, there is some $c\in D$ with
$a\leq c$ and $b\leq c$.}.
The assumption of Scott-continuity is also justified for floating-point numbers
$\mathbb F$ since that domain is finite, and common operators such as addition, 
multiplication with non-negative constants 
as well as rounding all are monotonic.
As usual, we write binary operators (such as ``$+$'') in infix notation.
One particular operator in $\Omega$ is $\vee$ (least upper bound). 
All operators in $\Omega$ different from $\vee$ are assumed to be \emph{strict},
i.e., preserve the least element $-\infty$. 
Since all operators are 
Scott-continuous, so is the \emph{evaluation} function
$\sem{e}:(X\to{\mathbb D})\to{\mathbb D}$ which, for every expression $e$ is defined by
\[
\begin{array}{lll}
\sem{c}\,\rho	&=& c	\\
\sem{x}\,\rho	&=& \rho(x)	\\
\sem{\omega(e_1,\ldots,e_k)}\,\rho	&=& \omega(\sem{e_1}\,\rho,\ldots,\sem{e_k}\,\rho)
\end{array}
\]
A mapping $\rho: X\to{\mathbb D}$ is a \emph{solution} 
to the equation system \eqref{eq:system} if 
$\rho(x) = \sem{e_x}\,\rho$
%
%
holds for all unknowns $x\in X$. 
\ignore{
Assume that $\mathbb D$ is a complete lattice.
Then two facts are worth noting.
\begin{enumerate}
\item	If $\rho:X\to\mathbb D$ is a pre-solution, then there is a least solution $\rho_1$ 
	with $\rho\leq\rho_1$; likewise,
\item	If $\rho:X\to\mathbb D$ is a post-solution, then there is a greatest solution $\rho_1$ 
	with $\rho_1\leq\rho$.
\end{enumerate}
The assignment $\rho_0$ which maps each unknown to the least possible value $-\infty$ in 
$\mathbb D$ clearly is a pre-solution. 
}
\emph{Value iteration} for a system starts from some valuation $\rho_0:X\to\mathbb D$, and successively computes 
a sequence $\rho_t, t\geq 0,$ where for $t>0$, $\rho_t$ is obtained from $\rho_{t-1}$ by updating the value
of $\rho_{t-1}$ for some unknown $x$ with $\sem{e_x}\,\rho_{t-1}$. 
In case that $\rho_0\leq\underline\rho$, $\underline\rho$ the least solution of the system, 
the sequence is \emph{increasing}.
Dually, when $\rho_0\geq\overline\rho$ where $\overline\rho$ is the greatest solution of the system, 
the sequence is \emph{decreasing}.
This sequence, be it increasing or decreasing, may be very long or, 
in case of the integers, the rationals or reals even infinite, 
i.e., never reach a solution.
%
%

Consider our example program. Simplifying the system of equations for lower and upper 
bounds for \lstinline|i|, and restricting to it to the upper bound for \lstinline|i|
at the loop head, while optimistically assuming the value 0 as a lower bound to \lstinline|i|, 
provides us with the single equation 
\begin{equation}
  x = 0 \vee ((x \wedge (N-1))+1)\label{eqn:simple_loop}
\end{equation}
where we have written $x$ instead of $\angl{1,i^+}$.
Assuming $N > 0$, value iteration iteratively computes for $x$ 
the values $0$, $1$, \dots, until reaching~$N$.
To speed up iteration, a practical analyzer therefore may apply \emph{widening} 
\citep{DBLP:journals/logcom/CousotC92} by extrapolating
the value for $x$ to $+\infty$.
%
A subsequent \emph{narrowing} pass, starting with the result from the previous iteration,
may in this case recover the value $N$ for $x$.
Consider now the following slightly more involved program:
{\small\begin{lstlisting}
int i=0;
while (true) {
  if (*) {
    if (i < N) i = i+1;
    else i = i/2;
  }
}
\end{lstlisting}}
\noindent where \lstinline|(*)| expresses nondeterministic choice. 
A corresponding (slightly) simplified equation for the
upper bound of \lstinline|i| at the loop head now is
\begin{equation}
  x = 0 \vee x\vee ((x \wedge (N-1))+1)\vee x/2 \label{eqn:simple_loop2}
\end{equation}
The valuation attained after widening again assigns the value $\infty$ to $x$.
Starting from there, a narrowing iteration retains this value --
although the least solution of the equation is still $N$.
Various improvements on widening \& narrowing would work on this example 
and provide the value $N$ for $x$.
Yet, they are still heuristic, there is no guarantee that they would work on other examples and produce the least invariant in the domain \cite{DBLP:conf/sas/GopanR07}. 
Our goal here is to describe a \emph{principled} approach that is applicable to a large class of cases.

\section{\emph{Max}-policy iteration}
	\label{sec:max-policies}\label{sec:max}

\emph{Max}-policy iteration \citep{GawlitzaS2007,GawlitzaS2011,Gawlitza_Seidl_FMSD2013} 
works by repeatedly replacing each $\vee$-operator in right-hand sides with one of its arguments
according to the current policy.
The iteration results in
succession of systems without $\vee$, whose least solutions \emph{under-approximate} the least solution 
of the original problem.
\ignore{
For simplicity, we present \emph{max}-policy iteration as applied to one system of equations
which could be thought of as constructed for the interval analysis of a program.
%
%
The construction of this system may have introduced operators ``$;$'' (sequence) and ``$?$''
which can conveniently be resolved during policy iteration.
For a clearer exposition, we therefore omit the treatment of these operators
from our technical discussion.
We also mention in passing that policy iteration need not be performed monolithically for 
a system of equations representing the whole program --- but locally as part of a generic 
abstract interpretation solver where it may
collaborate with other domains \citep{Karpenkov_PhD,karpenkov:hal-01255314}. 
Policy iteration then appears as a kind of \emph{magical widening operator}.
}
Consider a system of equations \eqref{eq:system} where right-hand sides $e_x$ are of the form
\begin{equation}
e_x = \mbox{$\bigvee_{j\in J_x} e_{x,j}$}\qquad(x\in X)
\label{rhs:max}
\end{equation}
Here, $J_x\subseteq{\mathbb N}_0$ is a finite set of \emph{choices} where each sub-expression $e_{x,j}$
does not have any further occurrences of $\vee$. 
%
For each choice $j$ of $x$, $\sem{e_{x,j}}:(X\to{\mathbb D})\to{\mathbb D}$ still is Scott-continuous.
For convenience, we assume that $0\in J_x$ for each $x\in X$ where $e_{x,0} = {-\infty}$, 
i.e., evaluates to the least element in the linear order.

A \emph{max}-policy $\pi:X\to{\mathbb N}_0$ assigns to each variable $x$ a possible choice $\pi(x)\in J_x$.
For $\pi$, we obtain from $\Sigma$ the simplified system $\Sigma^\pi$ as
\begin{equation}
x = e_{x,\pi(x)}\qquad(x\in X)
\label{eq:simpified_system}
\end{equation}
Given that 
$\mathbb D$ is complete, this system has a \emph{greatest} solution --- which is also the case for the
rationals when each $e_{x,\pi(x)}$ is a minimum of \emph{affine} expressions.\textsuperscript{\ref{note:qelim}}
Here, an affine expression is built up from constants and unknowns by addition and multiplication with 
non-negative constants. 
%
In a nutshell, \emph{max}-policy iteration proceeds as follows (see Fig.\ref{fig:max}). 
\begin{itemize}
\item
It starts with the policy $\pi_0 = \{x\mapsto 0\mid x\in X\}$ and $\rho_0 = \{x\mapsto-\infty\mid x\in X\}$.
\item
For each $t\geq 0$, it determines a mapping $\rho'_{t+1}$ with $\rho'_{t+1}(x) = \sem{e_{x}}\,\rho_t$.
It checks whether $\rho_t$ is already a solution of $\Sigma$,
i.e., whether $\rho_t = \rho'_{t+1}$.
%
%
\item
If not, a new policy $\pi_{t+1}$ is determined \emph{reluctantly} from $\pi_t$.
This means, that $\pi_{t+1}(x) = \pi_t(x)$ whenever $\rho_t(x) = \rho'_{t+1}$ holds.
Otherwise, a $j$ is selected for $x$ where the maximum of the values
$\sem{e_{x,j}}\,\rho_t$ is attained.
\item
For each encountered policy $\pi_{t+1},t\geq 0$, it determines a solution $\rho_{t+1}\geq\rho'_{t+1}$
of the simplified system $\Sigma^{\pi_{t+1}}$ by means of the external solver \textbf{solve}. 
\end{itemize}
\begin{figure}[t]
\begin{center}
\[
\begin{array}{l}
\textsf{max\_policy\_iterator}\;(\textbf{solve},\Sigma)\;\{	\\
\qquad\begin{array}[t]{l}
\pi\;{:=}\;\{x\mapsto 0\mid x\in X\};\\	
\rho\;{:=}\;\{x\mapsto -\infty\mid x\in X\};\\
\textbf{do}\;\{	\\
\qquad\rho'\;{:=}\;\{x\mapsto \sem{e_x}\,\rho\mid x\in X\};	\\
\qquad\textit{modified}\;{:=}\;\textbf{false};	\\
\qquad\textbf{forall}\;(x\in X)	\\
\qquad\qquad\textbf{if}\;(\rho'(x) \neq \rho(x))\;\{ \\
\qquad\qquad\qquad\textit{modified}\;{:=}\;\textbf{true};	\\
\qquad\qquad\qquad\pi(x)\;{:=}\;\textbf{argmax}_{J_x}\{j\mapsto \sem{e_{x,j}}\,\rho\};	\\
\qquad\qquad\}	\\
\qquad\rho = \textbf{solve}(\Sigma^\pi,\rho');	\\
\}\;\textbf{while}\,(\textit{modified});\\
\textbf{return}\;\rho;
\end{array} 
\end{array} 
\]
\caption{\label{fig:max}\emph{max}-policy iteration algorithm using \textbf{solve} to
determine solutions of $\Sigma^\pi$.}
\end{center}
\end{figure}
Consider equation \eqref{eqn:simple_loop2} for the second example program, 
decorated with $-\infty$:
\[
x = {-\infty}\vee 0\vee x \vee (x\wedge (N-1)) +1 \vee x/2
\]
Assume that \textbf{solve} always determines the \emph{greatest} solutions.
Then the following policies, simplified systems and solutions are encountered:
\[
\begin{array}{l|c|l|r}
t\;\;	& \;\;\pi_t(x)\;\;	&\;\; \Sigma^{\pi_t} & \;\;\rho_{t}(x)	\\
\hline
\hline
0	& 0 & \;\;x = -\infty & -\infty	\\
1	& 1 & \;\;x = 0	& 0		\\
2	& 3 & \;\;x = (x\wedge (N-1)) +1\;\; & N\\
\end{array}
\]
Note that, due to reluctant improvement, neither policy $\{x\mapsto 2\}$ nor $\{x\mapsto 4\}$ ever is selected.
In general, we have:

\begin{lemma}\label{l:term}

\emph{Max}-policy iteration for systems of equations with right-hand sides 
\eqref{rhs:max}, using a function \textbf{solve} to compute \emph{greatest} solutions, 
terminates with a solution.
\end{lemma}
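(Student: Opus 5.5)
The plan is to show that the sequence of valuations $\rho_t$ strictly increases as long as the iteration has not stopped. Since the solutions of the simplified systems are determined by the policies, no policy can then be visited twice. There are only finitely many policies, $\prod_{x\in X}|J_x|$ of them, so the loop must terminate. When it does, \textit{modified} is false, so $\rho=\rho'$, i.e., $\rho(x)=\sem{e_x}\,\rho$ for all $x$, and $\rho$ is a solution of $\Sigma$.

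The first step is an invariant, proved by induction on $t$: $\rho_t$ is a solution of $\Sigma^{\pi_t}$, and $\rho_t\leq\sem{e_x}\rho_t$ pointwise, i.e., $\rho_t$ is a pre-solution of $\Sigma$. For $t=0$ this is immediate, since $e_{x,0}=-\infty$. For the step, $\rho_t$ solves $\Sigma^{\pi_t}$, so $\rho_t(x)=\sem{e_{x,\pi_t(x)}}\rho_t\leq\sem{e_x}\rho_t=\rho'_{t+1}(x)$, hence $\rho_t\leq\rho'_{t+1}$.

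Reluctant improvement then gives $\sem{e_{x,\pi_{t+1}(x)}}\rho_t=\rho'_{t+1}(x)$ for every $x$. This holds for unchanged components because equality $\rho_t(x)=\rho'_{t+1}(x)$ holds there, and for changed components because of the argmax choice. Hence $\rho'_{t+1}$ is obtained from $\rho_t$ by one application of the right-hand sides of $\Sigma^{\pi_{t+1}}$.

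By monotonicity, $\rho'_{t+1}\leq \sem{\Sigma^{\pi_{t+1}}}\rho'_{t+1}$, so $\rho'_{t+1}$ is a pre-solution of $\Sigma^{\pi_{t+1}}$. In a complete lattice (Knaster–Tarski), the greatest solution of $\Sigma^{\pi_{t+1}}$ lies above every pre-solution, so $\rho_{t+1}=\textbf{solve}(\Sigma^{\pi_{t+1}},\rho'_{t+1})\geq\rho'_{t+1}$. It also satisfies $\rho_{t+1}(x)=\sem{e_{x,\pi_{t+1}(x)}}\rho_{t+1}\leq\sem{e_x}\rho_{t+1}$, which re-establishes the invariant. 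For $\mathbb Q$ with min-affine right-hand sides, I will cite the footnote's claim that the greatest solution exists and dominates pre-solutions.

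For strictness: if \textit{modified} holds, then $\rho'_{t+1}\neq\rho_t$, and together with $\rho_t\leq\rho'_{t+1}$ this gives $\rho_t<\rho'_{t+1}\leq\rho_{t+1}$.

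The main obstacle is excluding the repetition of a policy, $\pi_s=\pi_t$ with $s<t$. Both $\rho_s$ and $\rho_t$ are then the greatest solution of the same system $\Sigma^{\pi_s}$ and should be equal, contradicting $\rho_s<\rho_t$. The subtlety is that $\rho_0$ is not obtained via \textbf{solve}; but $\pi_0$ forces the all-$-\infty$ system, whose only solution is $\rho_0$, so the case $s=0$ is covered as well. One must also check that strictness persists beyond a single step. This holds because the sequence $(\rho_t)$ is monotone and strictly increasing at every step, so any two distinct indices give distinct valuations. Therefore the policies are pairwise distinct, and the number of iterations is bounded by the number of policies.
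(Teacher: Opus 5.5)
Your proposal is correct and follows the paper's argument: the valuations $\rho_t$ strictly increase, so no policy can recur (each $\rho_t$ is the greatest solution of $\Sigma^{\pi_t}$), and finitely many policies force termination. The differences are minor. You use $\rho'_{t+1}$ directly as a pre-solution of $\Sigma^{\pi_{t+1}}$ instead of passing through the least solution above $\rho_t$. You also make explicit the invariant, the $\rho_0$ case, and why the final valuation solves $\Sigma$, all of which the paper leaves implicit.
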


\begin{proof}
As the set $X$ of unknowns as well as the sets $J_x$, $x\in X$, are finite,
it suffices to prove that the same \emph{max}-policy is never encountered twice.
For a contradiction, assume that the sequence $\pi_t,t\geq 0,$ of encountered policies
is infinite.
Consider a \emph{max}-policy $\pi_t$ encountered during the iteration with 
$\rho_t$ as the greatest solution of the simplified system $\Sigma^{\pi_t}$.
Assume that $\pi_t$ is improved to a 
\emph{max}-policy $\pi_{t+1}$. This means that there is a non-empty subset $Y\subseteq X$
such that $\sem{e_{x,\pi_{t+1}}}\,\rho_t = \sem{e_{x}}\,\rho_t > \rho_t(x)$ for all $x\in Y$ and
$\pi_{t+1}(x) = \pi_t(x)$ otherwise.
%
Due to monotonicity of all $\sem{e_{x,j}}$, $\Sigma^{\pi_{t+1}}$ has a \emph{least}
solution $\underline\rho_{t+1} \geq \rho_t$ with $\underline\rho_{t+1}(x) > \rho_t(x)$ for all $x\in Y$.
The solution $\underline\rho_{t+1}$ of $\Sigma^{\pi_{t+1}}$, however, is bounded by the \emph{greatest} 
solution $\rho_{t+1}$ of $\Sigma^{\pi_{t+1}}$. Accordingly,
$\rho_t\leq\rho_{t+1}$ with $\rho_t(x) < \rho_{t+1}(x)$ for every $x\in Y$.
The sequence $\rho_t,t\geq 0$, therefore is strictly increasing --- implying that all systems
$\Sigma^{\pi_t}$ must be distinct. Since the number of \emph{max}-policies is finite, this is impossible.
\qed
\end{proof}

\noindent
Let us call each system $\Sigma^{\pi_{t+1}}$ encountered during \emph{max}-policy iteration for $\Sigma$
a \emph{\textit{max}-iterate}. A max-iterate $\Sigma^{\pi_{t+1}}$ is deemed \emph{concave}
if the greatest solution $\rho_{t+1}$ of $\Sigma^{\pi_{t+1}}$ 
is the \emph{only} solution $\underline\rho$ of $\Sigma^{\pi_{t+1}}$ with
\begin{equation}
\sem{e_{x,\pi_{t+1}(x)}}\,\rho_t\leq\underline\rho(x)\qquad(x\in X)
\label{def:concave_iterates}
\end{equation}
Indeed, when the solutions $\underline\rho_{t+1}$ and $\rho_{t+1}$ in the proof of Lemma \ref{l:term} \emph{coincide}, 
\emph{max}-policy iteration will terminate with the \emph{least} solution of $\Sigma$. We have:

\begin{lemma}\label{l:unique}
If the system $\Sigma$ with right-hand sides 
\eqref{rhs:max} 
has only concave \emph{max}-iterates,
\emph{max}-policy iteration for $\Sigma$
terminates with the \emph{least} solution of $\Sigma$.
\end{lemma}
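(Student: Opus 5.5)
We prove that every $\rho_t$ produced by the iteration satisfies $\rho_t\leq\underline\rho$, where $\underline\rho$ is the least solution of $\Sigma$. By Lemma~\ref{l:term} the iteration terminates with some $\rho_t$ that is a solution of $\Sigma$. Hence $\underline\rho\leq\rho_t$, and together with the invariant this gives $\rho_t=\underline\rho$.

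The invariant is shown by induction on $t$. For $t=0$ we have $\rho_0=\{x\mapsto-\infty\}\leq\underline\rho$. For the inductive step, assume $\rho_t\leq\underline\rho$ and let $\pi_{t+1}$ be the improved policy. Write $f(\rho)(x)=\sem{e_{x,\pi_{t+1}(x)}}\,\rho$ for the (Scott-continuous) function of $\Sigma^{\pi_{t+1}}$.

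First, $\rho_t\leq f(\rho_t)$. Where $\pi_{t+1}(x)$ was chosen by argmax, we have $f(\rho_t)(x)=\sem{e_x}\,\rho_t>\rho_t(x)$. Where the choice was kept, $\pi_{t+1}(x)=\pi_t(x)$ and $\rho_t$ solves $\Sigma^{\pi_t}$, so $f(\rho_t)(x)=\rho_t(x)$. For $t=0$ both cases are immediate from $\rho_0=-\infty$.

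Second, $\underline\rho$ is a post-fixpoint of $f$: since $\underline\rho$ solves $\Sigma$, we get $f(\underline\rho)(x)\leq\bigvee_j\sem{e_{x,j}}\,\underline\rho=\underline\rho(x)$.

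Now run Kleene iteration of $f$ from $\rho_t$. The chain $\rho_t\leq f(\rho_t)\leq f^2(\rho_t)\leq\cdots$ is increasing. By monotonicity and $\rho_t\leq\underline\rho$, every element satisfies $f^k(\rho_t)\leq f^k(\underline\rho)\leq\underline\rho$. By Scott-continuity the supremum $\rho^*$ is a solution of $\Sigma^{\pi_{t+1}}$ with $\rho^*\leq\underline\rho$. It also satisfies $\rho^*\geq f(\rho_t)$, which is exactly condition \eqref{def:concave_iterates}.

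Concavity of $\Sigma^{\pi_{t+1}}$ says the greatest solution $\rho_{t+1}$ is the only solution satisfying \eqref{def:concave_iterates}. Hence $\rho_{t+1}=\rho^*\leq\underline\rho$, which completes the induction.

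The step that needs the most care is the existence of the supremum $\rho^*$ in the rational case, since $\mathbb Q$ is not a complete lattice. If the chain is finite, or $\mathbb D$ is complete, the argument goes through directly. Otherwise I would invoke the paper's remark (footnote~\ref{note:qelim}) that the relevant least and greatest solutions of affine min-systems exist and are rational. Concretely, the least solution of $\Sigma^{\pi_{t+1}}$ above $f(\rho_t)$ exists and is bounded by any post-fixpoint above $\rho_t$, in particular by $\underline\rho$.

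A minor remaining point is to check that the \textbf{solve} call in Fig.~\ref{fig:max}, which receives $\rho'$ as its starting point, really returns the greatest solution $\rho_{t+1}$ assumed in Lemma~\ref{l:term}.
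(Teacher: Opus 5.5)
Your proof is correct and follows the paper's argument. Both argue by induction on $t$ that $\rho_t$ stays below the least solution: the Kleene iterates of $\Sigma^{\pi_{t+1}}$ are bounded by $\underline\rho$ (the paper starts them at $\sem{e_{x,\pi_{t+1}(x)}}\,\rho_t$), concavity identifies their limit with $\rho_{t+1}$, and termination from Lemma~\ref{l:term} then gives equality. Your explicit check that $\rho_t\le f(\rho_t)$ fills in the chain's monotonicity, which the paper glosses, and your final ``minor point'' is settled by concavity itself, since any solution that \textbf{solve} returns above $\rho'$ must be the greatest one.
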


\begin{proof}
Let $\rho_t,t=0,\ldots,N$ denote the sequence of assignments computed by \emph{max}-policy
iteration, and $\rho^*$ the least solution of \eqref{eq:system}. 
By induction on $t\geq 0$, we prove that $\rho_t\leq \rho^*$ holds.
%
For $t=0$, this is trivially true. So assume that $\rho_t\leq\rho^*$, and
$\rho'_{t+1}(x) = \sem{e_x}\,\rho_t$.
Let $\pi_{t+1}$ the next \emph{max}-policy determined according to $\rho_t$ and $\rho'_{t+1}$.
By definition, $\rho_{t+1}$ equals the only solution $\underline\rho$ of $\Sigma^{\pi_{t+1}}$ 
with 
$
\rho'_{t+1}\leq\underline\rho
$.
%
Since expression evaluation is Scott-continuous, 
$\rho_{t+1}$ can be obtained as the least upper bound $\bigvee_{j\geq 0}\rho^{(j)}$
where $\rho^{(0)}(x) = \sem{e_{x,\pi_{t+1}}}\,\rho_t$ and for $j>0$, 
$\rho^{(j)}(x) = \sem{e_{x,\pi_{t+1}(x)}}\,\rho^{j-1)}$. 
We prove that for all $j\geq 0$, $\rho^{(j)}\leq\rho^*$.
For the base case $j=0$, we have
by monotonicity and the inductive hypothesis for $\rho_t$, 
\[
\rho^{(0)}(x) =
\sem{e_{x,\pi_{t+1}(x)}}\,\rho_t =
\sem{e_{x}}\,\rho_t \leq
\sem{e_{x}}\,\rho^* =\rho^*(x)
\]
holds for all $x\in X$.
For $j>0$, we assume by induction hypothesis, that $\rho^{(j-1)}\leq\rho^*$. 
Then
\[
\rho^{(j)}(x) = 
\sem{e_{x,\pi_{t+1}}}\,\rho^{(j-1)} \leq 
\sem{e_{x,\pi_{t+1}}}\,\rho^* \leq 
\sem{e_x}\,\rho^* = \rho^* (x)
\]
and the claim follows.
In particular, $\rho_N\leq\rho^*$.
Since $\rho_N$ is a solution of \eqref{eq:system}, we conclude that $\rho_N=\rho^*$ holds.
\qed
\end{proof}

\noindent
The assumption of Lemma \ref{l:unique} seems rather specific.
Still, only concave \emph{max}-iterates occur  in two important cases:
\begin{itemize}
\item 	for integer equations with addition, multiplication with non-negative constants
	and dedicated operators for multiplication of intervals
	\cite{DBLP:conf/cav/GawlitzaS09};
	%
	%
\item	for rational equations
	when right-hand sides use as arithmetic operators 
	addition as well as multiplication with 
	non-negative constants  --- where
	each right-hand side is \emph{bounded} by some finite bound $M<\infty$
	\cite{GawlitzaS2011}.
\end{itemize}
Let us call these two instances systems of \emph{polynomial integer equations} and 
\emph{bounded affine rational equations}, respectively.
%
As a first corollary, we thus recover the results from 
\cite{DBLP:conf/cav/GawlitzaS09,GawlitzaS2011}:

\begin{corollary}\label{c:least}
\emph{Max}-policy iteration terminates and returns least solutions 
for polynomial integer systems as well as for bounded affine rational systems. 
\qed
\end{corollary}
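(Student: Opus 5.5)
The corollary follows from Lemma~\ref{l:term} and Lemma~\ref{l:unique} once we know that, in both classes of systems, every \emph{max}-iterate is concave. Termination with a solution is Lemma~\ref{l:term}; it only needs \textbf{solve} to return greatest solutions of the simplified systems $\Sigma^{\pi}$. These exist: $\mathbb Z\cup\{\pm\infty\}$ is a complete lattice, and in the bounded rational case each $e_{x,\pi(x)}$ is a minimum of affine expressions, so the remark after \eqref{eq:simpified_system} applies. Least-ness of the returned solution then follows from Lemma~\ref{l:unique} if every $\Sigma^{\pi_{t+1}}$ is concave. So the work is to establish concavity in each class.

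For \textbf{polynomial integer systems} I would invoke the result of \cite{DBLP:conf/cav/GawlitzaS09}. The hypothesis \eqref{def:concave_iterates} says the starting point $\rho'_{t+1}$ is a pre-solution of $\Sigma^{\pi_{t+1}}$ lying above $\rho_t$, which is itself a solution of the previous iterate. The plan is to show that the least solution above $\rho'_{t+1}$ coincides with the greatest solution. Suppose some component is strictly below the greatest solution but finite. Over the integers, integrality forces each round of value iteration on the cyclic dependencies of the affected unknowns to gain at least $1$. Monotonicity of addition and non-negative multiplication then drives these components to $\infty$, which contradicts their being finite. For components that are finite at the least solution above $\rho'_{t+1}$, I would argue there is no strictly larger solution, by the same strict-increase argument along cycles.

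For \textbf{bounded affine rational systems} I would cite \cite{GawlitzaS2011}. Since all right-hand sides are bounded by $M$, the greatest solution of $\Sigma^{\pi}$ is finite. Starting from a pre-solution $\rho'_{t+1}$ that is strictly improving on the switched unknowns, the affine min-system has a unique solution above it. Concavity (piecewise-affine, monotone) of the right-hand sides gives the key inequality: if $\underline\rho<\overline\rho$ were two solutions above $\rho'_{t+1}$, then extrapolating along the line from $\underline\rho$ through $\overline\rho$ yields further pre-solutions growing without bound. That contradicts the bound $M$.

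The main obstacle is the rational concavity argument. One must check that extrapolation beyond $\overline\rho$ still gives pre-solutions, i.e.\ that for concave $f$ with $f(\underline\rho)=\underline\rho$ and $f(\overline\rho)=\overline\rho$ one has $f(\overline\rho+\lambda(\overline\rho-\underline\rho))\geq\overline\rho+\lambda(\overline\rho-\underline\rho)$. This inequality actually points the wrong way for concave $f$. The fix I would try first is to use the condition \eqref{def:concave_iterates} together with the reluctant policy choice, so that extrapolation is made from the strictly improving start rather than from an arbitrary solution. Since both facts are established in the cited works, the corollary itself reduces to combining them with Lemmas~\ref{l:term} and~\ref{l:unique}.
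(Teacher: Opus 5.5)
Your proof is correct and follows the paper's route: Lemma~\ref{l:term} gives termination, Lemma~\ref{l:unique} gives leastness, and the fact that polynomial integer systems and bounded affine rational systems have only concave \emph{max}-iterates is taken from \cite{DBLP:conf/cav/GawlitzaS09,GawlitzaS2011}, which the paper also cites rather than reproving.

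Rely on those citations and drop your own concavity sketches, because neither works as written.
\begin{itemize}
\item In the integer sketch, value iteration does not move at all at the least solution above $\rho'_{t+1}$. Starting from $\rho'_{t+1}$ instead, the gains can be capped by minima, e.g.\ $x=(x+1)\wedge 5$, so nothing is driven to $\infty$.
\item In the rational sketch, the workable concavity argument interpolates between a pre-solution and the larger fixpoint and compares the result with the smaller one. It needs that pre-solution to be strict on the coordinates where the two fixpoints differ. Since $\rho'_{t+1}$ is strict only on the switched unknowns, you would also need an invariant on the non-switched unknowns, maintained along the whole iteration.
\end{itemize}
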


\noindent
Beyond these two cases, we can use \emph{max}-policy iteration also to determine \emph{some}
solution of a system of equations. 
Generally, when applying \emph{max}-policy iteration,
we are left with the task of computing the \emph{greatest} solutions of the encountered 
\emph{max}-iterates. An important result in this direction is:

\begin{theorem}\label{t:term}
For the complete linear orders $\mathbb Z$ and $\mathbb F$, 
the greatest solutions of \emph{max}-iterates
can be found by value iteration.
Therefore, \emph{max}-policy iteration combined with value iteration is an effective 
algorithm for solving systems with right-hand sides of the form \eqref{rhs:max}
over ${\mathbb D}\in\{{\mathbb Z},{\mathbb F}\}$.
\end{theorem}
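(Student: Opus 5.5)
The plan is to run \emph{descending} value iteration on each \emph{max}-iterate $\Sigma^{\pi}$, starting from the top valuation $\top=\{x\mapsto\infty\mid x\in X\}$. Combined with Lemma~\ref{l:term}, this gives the effectiveness claim.

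The first step is the limit of the descending iteration. Since $\mathbb D$ is a complete lattice and all $\sem{e_{x,\pi(x)}}$ are monotone, the sequence $\rho^{(0)}=\top$, $\rho^{(k+1)}(x)=\sem{e_{x,\pi(x)}}\,\rho^{(k)}$ is decreasing. By induction on $k$, every iterate lies above every solution of $\Sigma^\pi$, in particular above the greatest solution $\overline\rho$: if $\rho^{(k)}\geq\overline\rho$ then $\rho^{(k+1)}(x)=\sem{e_{x,\pi(x)}}\,\rho^{(k)}\geq\sem{e_{x,\pi(x)}}\,\overline\rho=\overline\rho(x)$. Hence, if the sequence becomes stationary at some $\rho^{(k)}=\rho^{(k+1)}$, then $\rho^{(k)}$ is a solution that dominates all solutions, i.e., it is $\overline\rho$. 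The argument therefore reduces to showing stabilization after finitely many steps. The same argument shows that any chaotic (asynchronous) updating order also stays above $\overline\rho$.

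For $\mathbb F$, stabilization is immediate. The domain is finite, so a decreasing chain has length at most $|X|\cdot|\mathbb F|$, and the sequence stabilizes.

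For $\mathbb Z$, the set is infinite but well-founded only upward, so I must exclude infinite strictly descending chains. The key observation is that the iterate $\Sigma^{\pi_{t+1}}$ arises within \emph{max}-policy iteration: by the proof of Lemma~\ref{l:term}, $\Sigma^{\pi_{t+1}}$ has a solution $\underline\rho_{t+1}\geq\rho'_{t+1}>-\infty$ on the improved unknowns. More usefully, $\rho'_{t+1}=\sem{e}\,\rho_t$ is a \emph{post}-point of $\Sigma^{\pi_{t+1}}$ below all relevant solutions. Every descending iterate satisfies $\rho^{(k)}\geq\overline\rho\geq\underline\rho_{t+1}\geq\rho'_{t+1}$ pointwise. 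So for each unknown $x$, either $\rho^{(k)}(x)$ eventually reaches $\infty$ permanently, or it stays within $[\rho'_{t+1}(x),\infty)$. Any unknown whose values stay finite thus ranges over integers bounded below by $\rho'_{t+1}(x)$. If $\rho'_{t+1}(x)=-\infty$, I would use that $\pi_{t+1}(x)=\pi_t(x)$ and $\rho_t$ is a solution of $\Sigma^{\pi_t}$, which gives the lower bound $\rho_t(x)$.

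The main obstacle is the case $\rho_t(x)=-\infty$. There the lower bound is trivial, and an iterate could descend forever, for instance on $x=x-1$. Note that such a component would have $-\infty$ as its greatest solution only in the limit. Value iteration must then either be shown to reach $-\infty$ via strictness, or such components must be detected. My first attempt is to split the iteration for $\mathbb Z$: unknowns with $\rho_t(x)=-\infty$ that are not improved keep their old policy. I would show by induction over the policy iteration that these unknowns are already stable at $-\infty$ in $\Sigma^{\pi_{t+1}}$ when evaluated from $\rho_t$, and initialize them to $-\infty$ instead of $\infty$. With $\rho^{(0)}$ set to $\infty$ on the remaining unknowns, each remaining coordinate is bounded below by the finite integer $\rho'_{t+1}(x)$ or $\rho_t(x)$. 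An infinite strictly decreasing sequence in $\mathbb Z\cup\{\infty\}$ bounded below cannot exist, so the iteration terminates. Verifying that this modified initialization still yields the greatest solution relevant for Lemma~\ref{l:term} (namely, a solution $\geq\underline\rho_{t+1}$) is the delicate point. Finally, I would note that Lemma~\ref{l:term}'s proof only needs $\rho_{t+1}\geq\underline\rho_{t+1}$, not the absolute greatest solution.
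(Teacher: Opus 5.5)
\textbf{Gap: the case $\rho'_{t+1}(x)=-\infty$ over $\mathbb Z$ is left open.} Your treatment of $\mathbb F$ is sound. So is your treatment of unknowns with $\rho'_{t+1}(x)>-\infty$ over $\mathbb Z$: the iterates stay above $\overline\rho\geq\rho'_{t+1}$, so such coordinates are bounded below by a finite integer. The remaining case, which you call the main obstacle, is exactly the crux.

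First, your fallback bound there is vacuous. Since $\rho'_{t+1}(x)=\sem{e_x}\,\rho_t\geq\sem{e_{x,\pi_t(x)}}\,\rho_t=\rho_t(x)$, the assumption $\rho'_{t+1}(x)=-\infty$ already forces $\rho_t(x)=-\infty$.

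The missing idea is what reluctant improvement gives you:
\begin{itemize}
\item By induction on $t$, the earlier iterates were solved exactly. So $\rho_0\leq\rho_1\leq\cdots\leq\rho_t$ (proof of Lemma~\ref{l:term}), and $\rho'_{t'+1}\leq\rho_{t'+1}$ for every $t'<t$.
\item Hence, in this case, $\rho_{t'}(x)=\rho'_{t'+1}(x)=-\infty$ for all $t'\leq t$.
\item So $x$ was never improved, and $\pi_{t+1}(x)=\pi_0(x)=0$. Its right-hand side in $\Sigma^{\pi_{t+1}}$ is therefore the constant $e_{x,0}=-\infty$.
\end{itemize}
Plain descending iteration from $\top$ then sets $x$ to $-\infty$ at its first update. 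Every other coordinate is bounded below by a finite $\rho'_{t+1}(x)$, so the iteration stabilizes at $\overline\rho$. This is precisely the step the paper's proof rests on. The paper phrases it by contradiction: a coordinate that decreases forever has infimum $-\infty$, so its greatest-solution value is $-\infty$. By monotonicity of the $\rho_{t'}$ and reluctance, its policy is then still $0$.

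\textbf{Your workaround does not substitute for this.} You plan to prove that these unknowns are ``stable at $-\infty$ when evaluated from $\rho_t$''. That is just $\rho'_{t+1}(x)=-\infty$ restated, and it does not exclude right-hand sides such as $x-1$, or $y$ with $y$ initialized to $\infty$.
\begin{itemize}
\item For $x=y$, your modified start valuation is not above its own image: at $x$ it is $-\infty$ while the image is $\infty$. The iteration is then no longer descending, and you lose the invariant $\rho^{(k)}\geq\overline\rho$ on which your whole lower-bound argument rests.
\item For $x=x-1$, the iteration returns $-\infty$, although the greatest solution is $\infty$.
\end{itemize}
Moreover, altering the initialization changes the procedure, whereas the theorem asserts that value iteration computes the greatest solutions. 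Retreating to ``Lemma~\ref{l:term} only needs $\rho_{t+1}\geq\underline\rho_{t+1}$'' would prove a weaker statement, and you have not verified even that. Once you establish $\pi_{t+1}(x)=0$ as above, no modification is needed.
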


\begin{proof}
Assume for a contradiction, that value iteration
does not terminate with computing the greatest solution $\rho_{t+1}$ for the $(t+1)$th encountered
\emph{max}-policy $\pi_{t+1}$. 
Then there is a variable $x$ for which a strictly decreasing sequence
of values $d_0 > d_1 > \ldots > d_k >\ldots$ is computed.
In case of integers, this implies that the greatest lower bound for $x$ is ${-\infty}$.
This means that not only $\rho_{t+1}(x) = {-\infty}$, but (due to monotonicity) also
$\rho_{t'}(x) = {-\infty}$ for the greatest solutions of the systems $\Sigma^{\pi_{i'}}$ for all
previously encountered max policies $\pi_{t'}$. Due to reluctant updates of policies, this
means that $\pi_{t+1}(x) = \pi_{0}(x) = 0$, i.e., the right-hand side of $x$ in $\Sigma^{\pi_{t+1}}$
already equals ${-\infty}$. Therefore, all $d_k$ equal ${-\infty}$ --- contradiction.

Now consider the case of floating-point numbers: 
as $\mathbb F$ is finite, \emph{every} decreasing sequence of valuations must necessarily stabilize with 
a fixed valuation.
\qed
\end{proof}
The termination result for the integers, as provided by Theorem \ref{t:term}, is remarkable.
It provides us with an effective technique also in case when the integer system is not
plainly polynomial --- but may use any kind of monotonic operators, in particular, integer division 
by constants.
%
Consider, e.g., the integer system \eqref{eqn:simple_loop2}
which requires integer division by 2. This seemingly innocent operation is \emph{not supported} 
in integer polynomial systems -- but is now covered by Theorem \ref{t:term}.

Termination of value iteration for floating-point numbers crucially relies on the monotonicity 
of the right-hand sides $e_{x,j}$ --- which in turn is guaranteed, e.g., if these are defined 
by affine arithmetic expressions with monotonic rounding.
%
%
If during \emph{max}-policy iteration, the value for a \emph{max}-policy is obtained by calling 
an optimization solver implemented in floating-point, inconsistencies can occur, leading to not necessarily
increasing sequences of valuations $\phi_t$. 
As a consequence, non-termination may be observed due to looping on \emph{max}-policies. 

Theorem \ref{t:term}, while providing termination guarantees, gives no hints on
how \emph{efficiently} the solution is computed. For the floating-point domain, 
we will come back to this question in Section~\ref{sec:floating-point}.


\section{\emph{Max-min} policy iteration}\label{sec:max-min}

In Section~\ref{sec:max}, we have presented the algorithm for \emph{max}-policy 
iteration --- but left open how the greatest solutions of the encountered \emph{max}-iterates
could be computed. 
As one technique, we presented decreasing value iteration for computing greatest solutions --- a technique, 
however, which may not terminate when solving systems over the rationals.
Consider, e.g., the equation
\begin{equation}
x = 1 \vee (0.5\cdot x + 3.0 \wedge 100.0) \vee (0.75 * x + 2.0 \wedge 100.0)  
\label{eq:rat-loop}
\end{equation}
The second encountered \emph{max}-policy $\pi_2 = \{x\mapsto 2\}$ results in the system
\begin{equation}
x = 0.5\cdot x + 3.0 \wedge 100.0
\label{eq:simple-rat-loop}
\end{equation}
Value iteration with exact arithmetic for the greatest solution yields a decreasing
sequence, starting from $100.0$ and converging to $6.0$ without ever reaching it.

\smallskip

When the system is bounded,
finding the greatest solution can be cast as an \emph{optimization problem} where the \emph{sum} 
of values of all unknowns is minimized \cite{GawlitzaS2011}.
%
Due to the surrounding \emph{max}-policy iteration,
we are guaranteed that each encountered optimization problem is \emph{bounded}. 
Thus, the minimum is attained at some 
finite point --- whose coordinates represent the desired solution.

\smallskip

Here, we introduce \emph{min}-policy iteration 
as an alternative approach, 
to determine the greatest solution of \eqref{eq:simple-rat-loop}. 
In this way, potentially expensive optimization problems are avoided
whose sizes grow with the size of the program.

In the given example, we start with the upper bound $100.0$ and then switch the choice 
at the minimum ---
to be now left with solving the equation
\begin{equation}
x = 0.5\cdot x + 3.0 
\label{eq:rat-eq}
\end{equation}
This system has three solutions for $x$, namely, $-\infty,6.0,$ and $\infty$. 
So, the problem remains which solution to choose.
%
Formally, assume that 
the right-hand sides $e_{x,j}$ in \eqref{rhs:max} are of the form
\begin{equation}
e_{x,j} = \mbox{$\bigwedge_{i\in I_{x,j}} e_{x,j,i}$}
\label{rhs:max_min}
\end{equation}
--- whenever $j\neq 0$, where each expression $e_{x,j,i}$ is affine, i.e., built up from constants and
unknowns by means of addition and multiplication with positive constants. 
For simplicity, we can further assume that
$e_{x,j,i}$ is of the form 
\begin{equation}
\mbox{$\sum_{x'\in X'} a_{x'}\cdot x'+ b$}
\label{d:affine}
\end{equation}
where $X'\subseteq X$; $0 < a_{x'} <\infty$ for all $x'\in X'$; and
$-\infty < b<\infty$.
Given a \emph{max}-policy $\pi:X\to{\mathbb N}_0$ with $X^\pi = \{x\in X\mid\pi(x)\neq 0\}$, 
we define a \emph{min}-policy as a mapping
$\mu:X^\pi\to{\mathbb N}_0$ where now
$\mu(x)$ is contained in $I_{x,\pi(x)}$, i.e., is a possible choice of an argument expression 
at the minimum in the right-hand side of $x$ in the \emph{max}-iterate $\Sigma^\pi$.
For convenience, we assume that there is a fixed finite upper bound $-\infty < M < \infty$,
such that for all $x\in X^\pi$, $M$ occurs as the 0th choice,
in every minimum expression \eqref{rhs:max_min}, i.e., $0\in I_{x,j}$ with $e_{x,j,0} = M$ for $j>0$.
For a \emph{max}-policy $\pi$ and a \emph{min}-policy $\mu$, the 
\emph{max-min iterate} $\Sigma^{\pi,\mu}$ is given by
\begin{equation}
x = e_{x,\pi(x),\mu(x)}\qquad(x\in X^\pi)
\label{eq:max_min}
\end{equation}
%
%
We may attempt to solve the system $\Sigma^\pi$ 
via \emph{min}-policy iteration as provided by 
\emph{dual} \emph{max}-policy iteration. 
According to this approach, however, we would be required to compute the
\emph{least} solutions of the systems \eqref{eq:max_min}.
The latter, however, is \emph{not} correct: The \emph{least} solution of our example 
equation \eqref{eq:rat-eq} is ${-\infty}$, which is not the intended solution.
Also, the \emph{greatest} solution of \eqref{eq:rat-eq} need not be correct.
What we must exploit here is that \emph{min}-policy iteration
is embedded into a surrounding \emph{max}-policy iteration.
Let $\pi_{t}$ denote a policy attained during \emph{max}-policy iteration 
and $\rho_{t}$ the current solution of $\Sigma^{\pi_{t}}$.
Let $\pi_{t+1}$ denote the next \emph{max}-policy, and
%
consider the mapping $\rho'_{t+1}$ given by 
$\rho'_{t+1}(x) = \sem{e_x}\,\rho_t = \sem{e_{x,\pi_{t+1}(x)}}\,\rho_{t}$ for $x\in X$.
Our goal is to determine the 
\emph{least} solution $\rho_{t+1}$ of $\Sigma^{\pi_{t+1}}$ with $\rho'_{t+1}\leq\rho_{t+1}$. 

%
By monotonicity, $\rho'_{t+1}$ must also be bounded by the 
solution determined for each \emph{max-min} iterate $\Sigma^{\pi_{t+1},\mu}$ 
attained during \emph{min}-policy iteration for $\Sigma^{\pi_{t+1}}$. 
Therefore, we 
select for any encountered \emph{min}-policy $\mu$, $\rho_{\pi_{t+1},\mu}$ as the \emph{least} solution
$\underline\rho$ of $\Sigma^{\pi_{t+1},\mu}$ with $\rho'_{t+1}\leq\underline\rho$ 
(to be determined by \textbf{least}).
Technically, we organize \emph{min}-policy iteration as follows (see Figure \ref{fig:min}).
Let $X'$ denote the set of unknowns $x\in X$ with $\rho' = \rho'_{t+1} >-\infty$. 
\begin{itemize}
\item 	We start with the \emph{min}-policy $\mu_0 = \{x\mapsto 0\mid x\in X'\}$,
	and the assignment $\rho_{t,0}(x) = M$ 
	if $x\in X'$ and $\rho_{t,0}(x)=-\infty$ otherwise.
\item	Given a policy $\mu_s$, we determine the next policy for $x\in X'$
	by setting $\mu_{s+1} (x) = \mu_{s}(x)$ if  
	$\sem{e_{x,\pi_{t+1}(x),\mu_s(x)}}\,\rho_{t,s} = \rho_{t,s}(x)$ and otherwise,
	setting it to some $i\in I_{x,\pi_{t+1}(x)}$ where 
	$\sem{e_{x,\pi_{t+1}(x),i}}\,\rho_{t,s} =
	\bigwedge_{i'\in I_{x,\pi_{t+1}(x)}} \sem{e_{x,\pi_{t+1}(x),i'}}\,\rho_{t,s}$.
\item	In case that $\mu_{s+1}$ is different from $\mu_s$, the assignment $\rho_{t,s+1}$ 
	is determined as the \emph{least} solution $\rho_{t,s+1}$ of
	$\Sigma^{\pi_{t+1},\mu_{s+1}}$ with $\rho'\leq\rho_{t,s+1}$.
\item	Assume that \emph{min}-policy iteration terminates with a policy $\mu_S$. 
	Then we construct a solution $\rho_{t+1}$ of $\Sigma^{\pi_{t+1}}$ by defining
	$\rho_{t+1}(x) = \rho_{t,S}$.
\end{itemize}
\begin{figure}[t]
\begin{center}
\[
\begin{array}{l}
\textsf{min\_policy\_iterator}(\Sigma^{\pi'},\rho')\;\{	\\
\qquad
\begin{array}[t]{l}
X'\;{:=}\;\{x\in X\mid\rho'(x)\neq -\infty\};i					\\
\mu\;{:=}\;\{x\mapsto 0\mid x\in X'\};						\\
\rho\;{:=}\;\{x\mapsto M\mid x\in X'\}\cup\{x\mapsto-\infty\mid x\not\in X'\}; 	\\
\textbf{do}\;\{ \\
\qquad\rho_1\;{:=}\;\{x\mapsto\sem{e_{x,\pi'(x),\mu(x)}}\,\rho\mid x\in X'\}\cup\{x\mapsto-\infty\mid x\not\in X'\};\\
\qquad\textit{modified}\;{:=}\;\textbf{false};  \\
\qquad\textbf{forall}\;(x\in X') \\
\qquad\qquad\textbf{if}\;(\rho_1(x) \neq \rho(x))\;\{ \\
\qquad\qquad\qquad\textit{modified}\;{:=}\;\textbf{true};       \\
\qquad\qquad\qquad\mu(x)\;{:=}\;\textbf{argmax}_{I_{x,\pi'(x)}}\{i\mapsto \sem{e_{x,\pi'(x),i}}\,\rho_1\};        \\
\qquad\rho = \textbf{least}(\Sigma^{\pi',\mu},\rho');   \\
\qquad\qquad\}  \\
\}\;\textbf{while}\,(\textit{modified});\\
\textbf{return}\;\rho;	\\
\}
\end{array}
\end{array}
\]
\caption{\label{fig:min}The \emph{min}-policy iteration algorithm for a new 
\emph{max}-policy $\pi'$ where $\rho'(x) = \sem{e_x}\,\rho = \sem{e_{x,\pi'(x)}}\,\rho$
for the previous solution $\rho$.}
\end{center}
\end{figure}
For this algorithm, we find:

\begin{lemma}\label{l:min}
\begin{enumerate}
\item	\emph{Min}-policy iteration for the system $\Sigma^{\pi_{t+1}}$
	terminates with the \emph{least} solution
	of $\Sigma^{\pi_{t+1}}$ subsuming $\rho'_{t+1}$.
\item	If $\Sigma$ has only concave \emph{max}-iterates, 
	$\rho_{t+1}$ equals the \emph{greatest} solution of $\Sigma^{\pi_{k+1}}$.
	%
	%
	In that case, the outer \emph{max}-policy iteration terminates as well.
	\qed
\end{enumerate}
\end{lemma}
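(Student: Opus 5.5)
The plan is to run the dual of the argument behind Lemma~\ref{l:term} and Lemma~\ref{l:unique}. Write $\rho' = \rho'_{t+1}$, $\pi' = \pi_{t+1}$, and let $\rho^\sharp$ denote the least solution of $\Sigma^{\pi'}$ with $\rho'\leq\rho^\sharp$. This exists because $\rho'$ is a pre-solution of $\Sigma^{\pi'}$ and the system is bounded by $M$.

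\textbf{Step 1 (invariant).} We show by induction on $s$ that $\rho^\sharp\leq\rho_{t,s}$ and that $\rho_{t,s}$ is a post-solution of $\Sigma^{\pi'}$, i.e.\ $\sem{e_{x,\pi'(x)}}\,\rho_{t,s}\leq\rho_{t,s}(x)$.

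For $s=0$ this holds because every $e_{x,\pi'(x)}$ contains $M$ as its $0$th argument. Hence every solution is bounded by $M$ on $X'$, and $\rho^\sharp(x)=-\infty$ outside $X'$ by strictness. For the latter we use that $\rho'(x)=-\infty$ means $\pi'(x)=0$, via reluctant updates as in Theorem~\ref{t:term}.

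For the inductive step, $\mu_{s+1}$ selects an argument attaining the minimum at $\rho_{t,s}$. So $\rho_{t,s}$ is a post-solution of the affine system $\Sigma^{\pi',\mu_{s+1}}$. The system $\Sigma^{\pi',\mu_{s+1}}$ also has $\rho^\sharp$ as a post-solution, because $\rho^\sharp(x)=\bigwedge_i\sem{e_{x,\pi'(x),i}}\,\rho^\sharp\leq \sem{e_{x,\pi'(x),\mu_{s+1}(x)}}\,\rho^\sharp$. Since $\rho^\sharp\geq\rho'$, Knaster--Tarski applied on the lattice $\{\rho\mid\rho\geq\rho'\}$ gives that $\textbf{least}(\Sigma^{\pi',\mu_{s+1}},\rho')$ is at most $\rho^\sharp$... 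This has the inequality the wrong way round, which is the crux of the whole proof.

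What we actually need is $\rho^\sharp\leq\rho_{t,s+1}$. I would obtain it from the fact that $\rho^\sharp$ is a solution of $\Sigma^{\pi'}$ above $\rho'$, so it is a pre-solution of each max-min iterate, since $\rho^\sharp\leq$ any single argument of the minimum. More precisely, $\rho^\sharp(x)=\bigwedge_i\ldots\leq\sem{e_{x,\pi'(x),\mu(x)}}\,\rho^\sharp$, so $\rho^\sharp$ is a pre-solution of $\Sigma^{\pi',\mu_{s+1}}$. Then:
\begin{itemize}
\item The least solution above $\rho'$ would be at most $\rho^\sharp$, which still does not give the bound.
\item The least solution above $\rho^\sharp$ is at least $\rho^\sharp$, but is not obviously the same as the least solution above $\rho'$.
\end{itemize}

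\textbf{Main obstacle.} The key step, and the one I expect to be hardest, is showing that for affine systems $\Sigma^{\pi',\mu}$ these coincide on the relevant component. The point is that the least solution above $\rho'$ is unique among finite solutions in the relevant strongly connected parts. For affine monotone maps $f(\rho)=A\rho+b$ with $A\geq 0$, two distinct finite fixpoints $\rho_1<\rho_2$ force $A(\rho_2-\rho_1)=\rho_2-\rho_1$. Then every point on the ray from $\rho_1$ through $\rho_2$ is a fixpoint, which contradicts boundedness by $M$ whenever the system has a finite solution bounded by $M$.

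I would therefore split $\Sigma^{\pi',\mu}$ into the variables whose least solution above $\rho'$ is finite and those where it is $\infty$. In the latter case the inequality $\rho^\sharp\leq\rho_{t,s+1}$ is trivial. For the finite part, I would use the affine-ray argument to show that the least solution above $\rho'$ is the unique finite fixpoint dominating $\rho'$, hence it dominates $\rho^\sharp$.

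\textbf{Step 2 (monotone descent and termination).} The argument of Lemma~\ref{l:term}, dualized, shows that the sequence $\rho_{t,s}$ is strictly decreasing on changed variables whenever $\mu$ changes. Hence no min-policy repeats, and the iteration terminates since there are finitely many min-policies. At termination, $\rho_{t,S}$ is a solution of $\Sigma^{\pi'}$: its min-policy attains every minimum, so it is a fixpoint. It lies above $\rho'$ and, by Step~1, above $\rho^\sharp$. Therefore it equals $\rho^\sharp$, which proves part~1.

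\textbf{Part 2.} By concavity, (\ref{def:concave_iterates}) says that the greatest solution is the only solution above $\rho'_{t+1}$. Hence it equals the least solution above $\rho'_{t+1}$, which is $\rho_{t+1}$ by part~1. The outer iteration then receives exactly the greatest solutions assumed in Lemma~\ref{l:term}, so it terminates, and by Lemma~\ref{l:unique} it yields the least solution.
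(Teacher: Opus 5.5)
\textbf{There is a genuine gap, and it concerns the direction of the inequality.} Everything you establish yields only $\rho^\sharp\leq\rho_{t,S}$:
\begin{itemize}
\item each $\rho_{t,s}$ is a post-solution of $\Sigma^{\pi'}$ above $\rho'$;
\item the sequence strictly decreases;
\item the iteration stops at some $\rho_{t,S}$ that solves $\Sigma^{\pi'}$ with $\rho'\leq\rho_{t,S}$.
\end{itemize}
But $\rho^\sharp\leq\rho_{t,S}$ holds for \emph{every} solution above $\rho'$, simply because $\rho^\sharp$ is the least one. Your final ``therefore it equals $\rho^\sharp$'' needs the reverse inequality $\rho_{t,S}\leq\rho^\sharp$, which the proposal never addresses.

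That inequality cannot come from the ingredients you use. Take $\Sigma^{\pi'}$ to be $x=M\wedge x$ with $\rho'=\{x\mapsto 0\}$. Here $\rho'$ is a pre-solution, everything is bounded by $M$, and all your invariants hold. Yet $\rho^\sharp(x)=0$, while the iteration starts at $M$, finds $M\wedge M=M$, and stops at once with value $M$.

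The missing idea is uniqueness of the solution of $\Sigma^{\pi'}$ above $\rho'$, i.e.\ concavity of the max-iterate in the sense of \eqref{def:concave_iterates}. For bounded affine rational systems this is exactly the result the paper imports from \cite{GawlitzaS2011}; the paper gives no separate proof of this lemma. With uniqueness, part~1 follows at once from your termination argument, since the least and greatest solutions above $\rho'$ then coincide. Part~2 also follows directly, without going through part~1.

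The step you single out as the main obstacle is, by contrast, both mis-argued and unnecessary.
\begin{itemize}
\item $\rho^\sharp$ is a pre-solution, not a post-solution, of $\Sigma^{\pi',\mu}$, so your first Knaster--Tarski attempt gives nothing.
\item The ray argument fails because $\Sigma^{\pi',\mu}$ is not bounded by $M$ once $\mu$ selects a non-constant argument. Selecting $x$ in $M\wedge x$ gives $x=x$, whose solutions form a whole line, so a ray of fixpoints contradicts nothing.
\item The inequality $\rho^\sharp\leq\textbf{least}(\Sigma^{\pi',\mu},\rho')$ needs no uniqueness at all. Since $\bigwedge_{i} e_{x,\pi'(x),i}\leq e_{x,\pi'(x),\mu(x)}$ pointwise, the least solution of $\Sigma^{\pi',\mu}$ above $\rho'$ is a post-solution of $\Sigma^{\pi'}$ above $\rho'$. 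By Knaster--Tarski it therefore dominates $\rho^\sharp$.
\end{itemize}

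Your descent argument in Step~2 is sound once spelled out. $\rho_{t,s}$ is a post-solution of $\Sigma^{\pi',\mu_{s+1}}$ above $\rho'$, strict on the switched unknowns. Hence $\rho_{t,s+1}\leq\rho_{t,s}$, with strict decrease on those unknowns, and no min-policy can recur.
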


\noindent
Let us call the new algorithm \emph{max-min}-policy iteration.
As systems of bounded affine rational equations have only concave \emph{max}-iterates,
we obtain:

\begin{theorem}\label{t:minmax}
\emph{Max-min} policy iteration for a system $\Sigma$ of bounded affine rational equations terminates 
and computes the least solution of $\Sigma$.
\qed
\end{theorem}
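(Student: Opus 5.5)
The theorem should follow by combining Lemma~\ref{l:min}, Lemma~\ref{l:unique} and Lemma~\ref{l:term}, together with the fact quoted above from \cite{GawlitzaS2011}: every \emph{max}-iterate of a system of bounded affine rational equations is concave. The proof has three steps.

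First, fix an outer step $t$ with current assignment $\rho_t$ and next \emph{max}-policy $\pi_{t+1}$. By Lemma~\ref{l:min}(1), the inner \emph{min}-policy iteration terminates and returns the least solution $\rho_{t+1}$ of $\Sigma^{\pi_{t+1}}$ with $\rho'_{t+1}\leq\rho_{t+1}$. Each call to \textbf{least} on $\Sigma^{\pi_{t+1},\mu}$ concerns an affine system without $\vee$ or $\wedge$. Its least solution above $\rho'_{t+1}$ is rational: it is the unique finite solution of a linear system on the unknowns that stay finite, and the bound $M$ and the restriction to $X'$ keep all values in $\{-\infty\}\cup\mathbb Q$. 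So every inner step is effective.

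Second, concavity converts least into greatest. By definition \eqref{def:concave_iterates}, the greatest solution of the concave \emph{max}-iterate $\Sigma^{\pi_{t+1}}$ is its \emph{only} solution above $\rho'_{t+1}$. Since $\rho'_{t+1}(x)=\sem{e_{x,\pi_{t+1}(x)}}\,\rho_t$, the solution returned in the first step must be that greatest solution. This is Lemma~\ref{l:min}(2). Consequently, the outer loop of Figure~\ref{fig:max} with the inner solver plugged in behaves exactly like \emph{max}-policy iteration with a \textbf{solve} that returns greatest solutions.

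Third, conclude. By Lemma~\ref{l:term}, the outer iteration never repeats a policy, so it terminates after finitely many policies with a solution of $\Sigma$. By Lemma~\ref{l:unique}, that solution is the least solution of $\Sigma$.

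The main obstacle is Lemma~\ref{l:min}(1) itself, which the paper marks with \qed without a proof, in particular the termination of the inner iteration. I would argue that the values $\rho_{t,s}$ are strictly decreasing in $s$ as long as $\mu$ changes. Each $\rho_{t,s}$ is a solution of $\Sigma^{\pi_{t+1},\mu_s}$, hence a post-solution of $\Sigma^{\pi_{t+1}}$. Switching $\mu(x)$ to an argument attaining the minimum yields a post-solution of $\Sigma^{\pi_{t+1},\mu_{s+1}}$ that lies above $\rho'_{t+1}$. The least solution above $\rho'_{t+1}$ is therefore $\leq\rho_{t,s}$, and it is strictly smaller at the switched unknowns, provided such a least solution exists. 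Existence is where I expect the difficulty: for equations like \eqref{eq:rat-eq}, the solution set above $\rho'_{t+1}$ could be empty apart from infinite values. Here boundedness by $M$ and the fact that $\rho'_{t+1}$ is a pre-solution of $\Sigma^{\pi_{t+1}}$ must be used (Kleene iteration from $\rho'_{t+1}$ stays below $\rho_{t,s}$). Strict decrease then rules out repeated \emph{min}-policies, and the finiteness of the policy sets gives termination. Minimality among the solutions of $\Sigma^{\pi_{t+1}}$ above $\rho'_{t+1}$ follows by induction over $s$: any such solution $\rho$ satisfies $\rho\leq\rho_{t,s}$, since it is a pre-solution of every $\Sigma^{\pi_{t+1},\mu}$.
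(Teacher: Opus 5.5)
Your proposal is correct and follows the paper's route. The theorem is obtained by plugging the inner \emph{min}-policy iteration of Lemma~\ref{l:min} into the outer loop. Concavity of the \emph{max}-iterates of bounded affine systems (Lemma~\ref{l:min}(2)) then identifies each inner result with the greatest solution, so Lemmas~\ref{l:term} and~\ref{l:unique} give termination and leastness.

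Your optional sketch of Lemma~\ref{l:min}(1) has one slip in its last step. Showing $\rho\leq\rho_{t,s}$ for every solution $\rho\geq\rho'_{t+1}$ proves that the inner result is the \emph{greatest} such solution, not the least. Also, being a pre-solution of $\Sigma^{\pi_{t+1},\mu}$ does not by itself put $\rho$ below its least solution above $\rho'_{t+1}$. You would need a sandwich argument on $[\rho,\rho_{t,s}]$ together with uniqueness of finite solutions of $\Sigma^{\pi_{t+1},\mu_{s+1}}$. This is harmless for the theorem, because your Step~2 only uses that the inner loop terminates with some solution above $\rho'_{t+1}$, and your strict-decrease argument already establishes that.
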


\noindent
Coming back to our example equation \eqref{eq:rat-loop} whose least solution $\rho^*$ 
is $8.0$. Consider the \emph{max}-policy $\pi_2$ with \eqref{eq:simple-rat-loop} 
as \emph{max}-iterate $\Sigma^{\pi_2}$. 
The greatest solution of that system is
$\rho_2 = \{x\mapsto 6.0\}$.  --- which equals the \emph{finite} solution 
$\rho^{\pi_2,\mu_1}$ of system \eqref{eq:rat-eq}.
The outer \emph{max}-policy iteration thus improves policy $\pi_2$ to 
$\pi_3 = \{x\mapsto 3\}$ 
--- for which ultimately
the solution $\{x\mapsto 8.0\}$ found, and both the inner \emph{min}-
and the outer \emph{max}-policy iteration terminate.


%
Theorem \ref{t:minmax} still leaves open \emph{how} the encountered systems 
\eqref{eq:max_min}
without maximum and minimum operators are solved.
Assuming that the set of unknowns $x$ with $\pi_{t+1}(x)>0$ is given by $\{x_1,\ldots,x_n\}$, 
that  system takes the form
\begin{equation}
x = A\cdot x + b
\label{eq:affine}
\end{equation}
for $x = [x_1,\ldots,x_n]^\top$ the column vector of unknowns, $b\in{\mathbb Q}^n$ and 
$A$ a square matrix ${\mathbb Q}^{n\times n}$ with non-negative entries.
Since this system of equations is guaranteed to have a \emph{unique} finite solution 
\cite{GawlitzaS2011},
%
$I_n-A$ must be invertible ($I_n$ the square identity matrix of dimension $n$). 
Therefore, the unique finite solution $\rho$ written as the vector
$[\rho\,x_1,\ldots,\rho\,x_n]^\top$, is given by
\begin{equation}
\rho = (I_n-A)^{-1}\cdot b
\label{eq:solution}
\end{equation}
For computing this solution, we may use Gaussian elimination or any other algorithm 
for solving linear systems of equations.  In summary, we thus obtain:

\begin{theorem}\label{t:bounded-rational}
\emph{Max-min} policy iteration reduces the problem of computing
the least solution of a system of bounded affine rational equations to solving a series 
of linear systems of equations over the rationals.
\qed
\end{theorem}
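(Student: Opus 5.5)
The plan is to assemble Theorem~\ref{t:bounded-rational} from Theorem~\ref{t:minmax} and the structure of the innermost systems. By Theorem~\ref{t:minmax}, \emph{max-min} policy iteration terminates on bounded affine rational systems and returns the least solution of $\Sigma$. Its only non-elementary step is the call to \textbf{least}. Every other step is either evaluation of right-hand sides at a given rational assignment or an $\textbf{argmax}$ or $\textbf{argmin}$ over finitely many rational values. It therefore suffices to show that each call $\textbf{least}(\Sigma^{\pi_{t+1},\mu},\rho')$ can be answered by solving one linear system over $\mathbb Q$.

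For this I will fix an encountered pair $\pi=\pi_{t+1}$, $\mu$, and restrict to the unknowns $X'$ with $\rho'(x)>-\infty$. The remaining unknowns are kept at $-\infty$, which is consistent because all non-$\vee$ operators are strict. Since each $e_{x,\pi(x),\mu(x)}$ has the form \eqref{d:affine}, the system $\Sigma^{\pi,\mu}$ restricted to $X'$ is exactly of the form \eqref{eq:affine}, with $A\geq 0$ and $b\in\mathbb Q^n$. The constant choice $M$ contributes a zero row of $A$ and the entry $b_x=M$.

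Next I will invoke the cited fact from \cite{GawlitzaS2011} that, for encountered iterates of bounded systems, this system has a unique finite solution. Then $I_n-A$ is invertible and \eqref{eq:solution} gives that solution by Gaussian elimination. It remains to check that this finite solution is indeed the least solution above $\rho'$. Any solution $\underline\rho\geq\rho'$ is finite: it is $>-\infty$ on $X'$, and it is bounded above because every choice is dominated by $M$ via the surrounding minimum structure and by boundedness of the system. A finite solution must then coincide with $(I_n-A)^{-1}b$ by uniqueness. Existence of a solution $\geq\rho'$ comes from Lemma~\ref{l:min}, together with monotonicity as argued in the text preceding it.

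The main obstacle is this last identification. It requires that the cited uniqueness applies to every \emph{max-min} iterate actually encountered, and not merely to the \emph{max}-iterates. That is the claim I would verify most carefully, by inducting along the \emph{min}-iteration. The aim is to show that each $\rho_{t,s}$ remains finite and bounded by $M$ on $X'$, so that no infinite solution of \eqref{eq:affine}, such as $\infty$ in \eqref{eq:rat-eq}, can be the relevant one. Once that is in place, the theorem follows: the number of linear systems solved is finite, by termination of both iterations, and all remaining work is rational arithmetic.
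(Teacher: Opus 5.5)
Your route is the paper's: Theorem~\ref{t:minmax} gives termination and correctness, every \emph{max-min} iterate has the form \eqref{eq:affine}, the cited uniqueness of its finite solution makes $I_n-A$ invertible, and Gaussian elimination does the rest. The paper simply relies on the citation at that point. The extra step you add, identifying $(I_n-A)^{-1}b$ with the output of \textbf{least}, rests on a false claim: not every solution $\underline\rho\geq\rho'$ of $\Sigma^{\pi,\mu}$ is finite. Once $\mu$ has resolved the minimum, $M$ no longer bounds the right-hand side (unless $\mu(x)=0$). Indeed, $\infty$ solves \eqref{eq:rat-eq} and lies above every finite $\rho'$. Uniqueness of the finite solution does not rescue this, because it says nothing about that solution lying above $\rho'$. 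For example, $x=2x-1$ has the unique finite solution $1$, yet its least solution above $\rho'=\{x\mapsto 2\}$ is $\infty$. Lemma~\ref{l:min} does not help either: it concerns $\Sigma^{\pi_{t+1}}$, not the individual iterates $\Sigma^{\pi_{t+1},\mu}$. What you actually need is that the \emph{least} solution above $\rho'$ is finite. Your last paragraph names this as the goal but gives no mechanism.

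The mechanism is a sandwich argument along the \emph{min}-iteration, in two steps.

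First, $\rho'$ is a pre-solution of every $\Sigma^{\pi_{t+1},\mu}$. Since $\rho_t$ solves $\Sigma^{\pi_t}$, we have $\rho_t\leq\rho'$. A minimum lies below each of its arguments, so $\sem{e_{x,\pi_{t+1}(x),\mu(x)}}\,\rho'\geq\sem{e_{x,\pi_{t+1}(x),\mu(x)}}\,\rho_t\geq\sem{e_{x,\pi_{t+1}(x)}}\,\rho_t=\rho'(x)$.

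Second, $\rho_{t,s}$ is a post-solution of $\Sigma^{\pi_{t+1},\mu_{s+1}}$ with $\rho'\leq\rho_{t,s}$:
\begin{itemize}
\item where $\mu$ is kept, $\rho_{t,s}$ solves the old equation;
\item where $\mu$ is switched, the newly chosen argument evaluates strictly below $\rho_{t,s}(x)$;
\item for $s=0$, $\rho'\leq M=\rho_{t,0}$ on $X'$, because $M$ is an argument of every minimum.
\end{itemize}

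By Knaster--Tarski, the least solution above the pre-solution $\rho'$ lies below every post-solution above $\rho'$. Hence $\rho'\leq\rho_{t,s+1}\leq\rho_{t,s}\leq M$ on $X'$, so $\rho_{t,s+1}$ is finite. Only then does invertibility of $I_n-A$ identify it with $(I_n-A)^{-1}b$. With this induction replacing your finiteness claim, the argument is complete and agrees with the paper's.
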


\section{Dealing with Infinity}\label{sec:infinity}

During \emph{max}-policy iteration, each value may move once from $-\infty$, meaning that a new location is deemed 
abstractly reachable. 
Let us see now how to deal with $+\infty$ values in affine rational systems. 
Here, we generally assume 
that $\infty + c = c+\infty = \infty$ for every value $-\infty < c \leq \infty$,
and  $c\cdot\infty = \infty$ for every value $0 < c \leq \infty$, while $0\cdot\infty = 0$.
In the following, we thus no longer assume that
all right-hand sides are bounded by finite upper bounds.
%
As a minimal example of why infinities create difficulties, consider the single equation 
\begin{equation}
x = {-1.0}\vee (0.5\cdot x + 2.0 \wedge 2.0\cdot x + 3.0 )
\label{ex:rat-infinity-i}
\end{equation}
\emph{Max}-policy iteration will at some point choose the second argument of $\vee$  
and consider the \emph{max}-iterate
\begin{equation}
x = 0.5\cdot x + 2.0 \wedge 2.0\cdot x + 3.0
\label{ex:rat-infinity-ii}
\end{equation}
Using as initial value $\infty$ for $x$, none of the two alternatives will be able to shrink this value --- implying that $\infty$ is returned.
Still, this system  has a finite least solution, namely, $\{x\mapsto {4.0}\}$.
The situation is different if the minimum contains any sufficiently large, but finite upper bound $M$:
\begin{equation}
x = M \wedge 0.5\cdot x + 2.0 \wedge 2.0\cdot x + 3.0 
\label{ex:rat-infinity-iii}
\end{equation}
Starting with the value $M$ for $x$, the choice $0.5\cdot x + 2.0$ is preferable 
for which the finite solution indeed provides us with the value $4.0$.
%
Our approach of dealing with this issue proceeds in two steps:
\begin{enumerate}
  \item Instead of solving the system for a \emph{specific} (sufficiently large) upper bound,
  	we introduce a \emph{symbolic} upper bound. This means that we consider a linear order whose
	elements represent numbers $M \eta + \beta$ ($M$ now a symbolic upper bound, $0\leq\eta<\infty$ and
	$-\infty\leq\beta < \infty$) and perform \emph{max-min} policy iteration for that.
  \item From the resulting solution, we extract a solution of the system over $\RR$ by
  	substituting $M$ with $\infty$. What we obtain in this wa, happens to be not any, but
	the least solution (Theorem \ref{t:limit}).
	%
\end{enumerate}
Recall that $\RR$ denotes the linear order of the reals, extended with $-\infty$ and $\infty$ as least and 
greatest elements.
First, we introduce an affine rational system $\Sigma$ of equations with \emph{variable} upper bound $z$ 
($z\not\in X$ a fresh variable, technically serving as a \emph{parameter}) of the form:
\begin{equation}
x = \mbox{$\bigvee_{j\in J_x}\bigwedge_{i\in I_{x,j}} e_{x,i,j}$}
\label{eq:symbolic}
\end{equation}
without occurrences of $\infty$, but 
where $0\in I_{x,j}$ for all $x\in X$ and $j\in J_x$, $j>0$ with $e_{x,j,0} = z$.
For every $-\infty < M$, we obtain from $\Sigma$ the system $\Sigma_M$ with upper bound $M$
by substituting $z$ with $M$. In case that $M = \infty$, we thus obtain the \emph{unbounded} system
$\Sigma_\infty$.

To deal with symbolic upper bounds, we introduce the linear order $\RR^*$ consisting of $-\infty$
together with all pairs $(\eta,\beta)\in\RR^2$ where $0\leq\eta<\infty$ and $-\infty <\beta<\infty$.
%
Intuitively, a pair $(\eta,\beta)$ represents the linear combination $\eta\cdot M+\beta$ for sufficiently large $M$.
The (strict) ordering $<$ on $\RR^*$ therefore, is given by
$-\infty < (\eta,\beta)$ and $(\eta,\beta) < (\eta',\beta')$
if $ \eta < \eta'$ or $\eta = \eta'$ and $\beta < \beta'$.
Note that this ordering is \emph{not} a complete lattice.
Consider, e.g., the sequence $(0,\beta_i),i\geq 0$, where $\beta_0 = 0$, and for $i>0$, 
$\beta_i = \beta_{i-1} +1$. Then the set upper bounds consists of all $(\eta,\beta)$ where 
$\eta > 0$, but there is no \emph{least} upper bound.

We extend maximum, minimum, and the arithmetic operators from $\RR$ to $\RR^*$ where
maximum and minimum are derived from the ordering on $\RR^*$, and
\[
\begin{array}{l@{\qquad}l}
c\cdot(\eta,\beta) = (c\cdot\eta,c\cdot\beta) &
(\eta,\beta) + (\eta',\beta') = (\eta+\eta',\beta+\beta')	\\
\multicolumn{2}{l}{c\cdot(-\infty) = -\infty + r = r + (-\infty) = -\infty}
\end{array}
\]
for $0\leq c <\infty$, and $r,(\eta,\beta),(\eta',\beta')\in\RR^*$.
For every expression $e$ without occurrences of $\infty$, we provide an evaluation function
$\sem{e}^*:(X\to\RR^*)\to\RR^*$ by
\[
\begin{array}[t]{lll}
\sem{-\infty}^*\,\rho	&=& -\infty		\\
\sem{b}^*\,\rho	&=& (0,b)\qquad (-\infty < b < \infty) \\
\sem{x}^*\,\rho	&=& \rho(x)\qquad (x\in X) 	\\
\end{array}\qquad
\begin{array}[t]{lll}
\sem{z}^*\,\rho &=& (1,0)			\\
\sem{c\cdot e}^*\,\rho	&=& c\cdot\sem{e}^*\,\rho	\\
\sem{e\,\Box\,e'}^*\,\rho	&=& (\sem{e}^*\,\rho)\,\Box\,(\sem{e}^*\,\rho)	\\
\end{array}
\]
for all binary operators $\Box\in\{+,\wedge,\vee\}$. 
From the evaluation of an expression $e$ over $\RR^*$, we may recover the corresponding value over $\RR$
w.r.t.\ some upper bound $-\infty < M\leq\infty$, by means of the mapping $\alpha_M: {\RR}^*\to\RR$ with
\[
\begin{array}{lll@{\qquad}lll}
\alpha_M(-\infty)	&=& {-\infty}	&
\alpha_M(\eta,\beta) &=&\eta\cdot M+\beta	 
\end{array}
\]
\begin{lemma}\label{l:hom}
For every $-\infty < M$,
$\alpha_M$ is monotonic and commutes with $+$, scalar multiplication, minimum and maximum,
and we have for every $r\in{\RR}^*$,
\begin{enumerate}
\item	If $r < r'$ in $\RR^*$, then there is some $-\infty < M <\infty$ such that
	for all $M\leq M' < \infty$,
	$\alpha_{M'}(r) < \alpha_{M'}(r')$;
\item	$\alpha_\infty(r) = \bigvee_{M\leq M' < \infty} \alpha_{M'}(r)$ for all $M\geq 0$.
	\qed
\end{enumerate}
\end{lemma}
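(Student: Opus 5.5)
The plan is a direct case analysis on the representation of elements of $\RR^*$. Throughout, I fix $-\infty < M$ and use that every $r\in\RR^*$ is either $-\infty$ or a pair $(\eta,\beta)$ with $0\leq\eta<\infty$ and $\beta$ finite.

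\textbf{Monotonicity and commutation with the operators.}
\begin{itemize}
\item For monotonicity, note that $\alpha_M(-\infty)=-\infty$ lies below everything. For pairs with $(\eta,\beta)<(\eta',\beta')$ there are two cases.
  \begin{itemize}
  \item If $\eta=\eta'$ and $\beta<\beta'$, then $\eta M+\beta<\eta M+\beta'$ for every $M$, including $M=\infty$, where both sides equal $\infty$ when $\eta>0$.
  \item If $\eta<\eta'$, monotonicity can only be claimed for $M$ large enough, or for $M=\infty$. I read the statement as intending the cases $M\geq0$ relevant to item 2, and the weak inequality $\alpha_M(r)\leq\alpha_M(r')$. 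Otherwise I would flag that monotonicity only holds eventually; this is the content of item 1.
  \end{itemize}
\item Commutation with $+$ and with scalar multiplication by $0\leq c<\infty$ follows by distributivity, $(\eta+\eta')M+(\beta+\beta')=(\eta M+\beta)+(\eta'M+\beta')$, together with $\infty$-arithmetic when $M=\infty$. The conventions $c\cdot\infty=\infty$ for $c>0$ and $0\cdot\infty=0$ match exactly. The $-\infty$ cases are immediate by strictness.
\item Commutation with $\wedge$ and $\vee$ follows from monotonicity, since a monotone map between linear orders preserves binary min and max. In the regime where only weak monotonicity holds, I would check the ties separately.
\end{itemize}

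\textbf{Item 1.} Suppose $r<r'$.
\begin{itemize}
\item If $r=-\infty$, any finite $M$ works, since $\alpha_{M'}(r')$ is finite.
\item If $r=(\eta,\beta)$, $r'=(\eta',\beta')$ and $\eta=\eta'$, then $\beta<\beta'$ and any $M$ works.
\item If $\eta<\eta'$, then $\alpha_{M'}(r')-\alpha_{M'}(r)=(\eta'-\eta)M'+(\beta'-\beta)$. This is positive as soon as $M'>(\beta-\beta')/(\eta'-\eta)$, so take $M$ to be that threshold plus one.
\end{itemize}

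\textbf{Item 2.} For $r=-\infty$ both sides are $-\infty$. For $r=(\eta,\beta)$ with $\eta=0$, both sides equal $\beta$. For $\eta>0$, the values $\eta M'+\beta$ are unbounded as $M'\to\infty$, so their supremum in $\RR$ is $\infty=\alpha_\infty(r)$.

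The only delicate point is the tension between global monotonicity and the $\eta<\eta'$ case for small $M$. I expect to resolve this either by restricting to sufficiently large $M$ together with $M=\infty$, or by reading monotonicity in the eventual sense made precise by item 1. The remaining steps are routine.
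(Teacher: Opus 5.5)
Your case analysis for items 1 and 2 is correct, and so is your check that $\alpha_M$ commutes with $+$ and with scalar multiplication, including the $\infty$-conventions at $M=\infty$. The paper gives no proof of this lemma at all; it treats the lemma as routine.

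The gap is in the part you flag but leave unresolved: monotonicity and commutation with $\wedge,\vee$. These fail for \emph{every} finite $M$, not just for small $M$. So neither restricting to $M\geq 0$, nor to ``sufficiently large'' $M$, nor reading the inequality weakly rescues them. Take $r=(0,M+1)$ and $r'=(1,0)$. Then $r<r'$ in $\RR^*$, but $\alpha_M(r)=M+1>M=\alpha_M(r')$. Likewise $\alpha_M(r\wedge r')=M+1\neq M=\alpha_M(r)\wedge\alpha_M(r')$. No uniform threshold exists, because $\beta$ ranges over all of $\RR$. As literally stated, the global claim for finite $M$ cannot be proved. You should say so explicitly rather than offering two alternative readings, one of which (``$M\geq 0$'', weak inequality) is still false.

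What is true is the following, and it is what Theorem~\ref{t:limit} actually needs.
\begin{enumerate}
\item For $M=\infty$: $\alpha_\infty(\eta,\beta)=\infty$ if $\eta>0$ and $=\beta$ if $\eta=0$. This map is weakly monotone, and hence commutes with $\wedge$ and $\vee$. On a linear order, weak monotonicity already gives this, so no separate treatment of ties is needed.
\item For finite $M$, the statement holds only in an eventual, finitary form. Given a finite set $F\subseteq\RR^*$, apply item~1 to each of the finitely many strictly ordered pairs in $F$ and take the largest threshold. Beyond it, $\alpha_{M'}$ preserves and reflects the strict order on $F$, and therefore commutes with $\wedge,\vee$ on elements of $F$.
\end{enumerate}
Policy iteration over $\RR^*$ performs only finitely many comparisons. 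This eventual version is therefore exactly what is needed to transfer the run to $\Sigma_{M'}$. You should commit to this formulation and prove both parts.
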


\noindent
\emph{Max-min} policy iteration 
for the equation system $\Sigma$ given by \eqref{eq:symbolic}
can also be performed over the linear order $\RR^*$.
The encountered \emph{max-min} iterates 
still are of the form \eqref{eq:affine} --- where $A$ is a square matrix with $I_n-A$ being invertible.
Only the vector $b$ now has entries in $\RR^*$. 
Thus, this system has the valuation $\rho$ given by \eqref{eq:solution} as its unique finite solution.
%
Moreover, due to Lemma \ref{l:hom}, $\alpha_M\circ\rho$ is the solution of the system 
$x = A\cdot x+ b_M$ where $b_M$ is obtained from $b$ by applying $\alpha_M$ to each of its entries.

Let $\Pi = \pi_0,\ldots,\pi_N$ denote the sequence of encountered \emph{max}-policies, 
and for each $t=0,\ldots,N$, the corresponding sequence $\mu_{t,0},\ldots,\mu_{t,M_t}$ 
of encountered \emph{min}-policies where for each pair $(t,s+1)$, $\rho_{t,s+1}$ 
is the valuation for the \emph{max-min} iterates $\Sigma^{\pi_{t+1},\mu_{s+1}}$
encountered during the \emph{min}-policy iteration for the system.
Let $\rho_{t+1} = \rho_{t,M_t}$.
%
For any $-\infty < M\in\RR$, let $\Sigma_M$ denote the system where the variable upper bound $z$ in 
$\Sigma$ is interpreted as $M$.

\begin{theorem}\label{t:limit}
There is some $-\infty < M < \infty$ such that for every $M \leq M' <\infty$, 
\begin{enumerate}
\item	\emph{max-min} policy iteration for $\Sigma_{M'}$
	traverses the same sequence of \emph{max}-policies, and for each $t\in\{0,\ldots,N\}$,
	the same sequence of \emph{min}-policies where for 
	all $s\geq 0$,
	$\alpha_{M'}\circ\rho_{t,s}$ is the least solution of $\Sigma_{M'}^{\pi_{t+1},\mu_s}$ subsuming
	$\alpha_{M'}\circ\rho_{t}$.
\item	$\alpha_{M'}\circ\rho_N$   
	is the least solution of $\Sigma_{M'}$.
\item	$\alpha_\infty\circ\rho_N$ 
	is the least solution of $\Sigma_\infty$.
\end{enumerate}
\end{theorem}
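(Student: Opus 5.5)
The three claims are proved in order, with the key idea that a run of \emph{max-min} policy iteration over $\RR^*$ involves only finitely many comparisons. Each comparison, by Lemma~\ref{l:hom}(1), is preserved by $\alpha_{M'}$ once $M'$ exceeds some threshold. So a single $M$ large enough for all of them simultaneously makes the concrete run over $\Sigma_{M'}$ mirror the symbolic run step by step.

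\textbf{Claim~1.} The symbolic run is a finite object: finitely many \emph{max}-policies $\pi_0,\ldots,\pi_N$, finitely many \emph{min}-policies $\mu_{t,s}$, and finitely many valuations $\rho_t,\rho_{t,s}$ over $\RR^*$. Every decision in Figures~\ref{fig:max} and~\ref{fig:min} is determined by equalities and strict inequalities between the finitely many values $\sem{e_{x,j}}^*\,\rho_t$, $\sem{e_{x,j,i}}^*\,\rho_{t,s}$, $\rho_t(x)$ and $\rho_{t,s}(x)$. These decisions are the test $\rho'(x)\neq\rho(x)$, the choice of argmax/argmin, and the termination tests. For each strict inequality $r<r'$ among them, Lemma~\ref{l:hom}(1) gives a threshold; equalities are preserved for every $M'$ because $\alpha_{M'}$ is a function. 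I take $M$ to be the maximum of these thresholds, enlarged if necessary so that it is at least the finitely many finite values involved, ensuring that the initial value $M$ dominates them as in the bounded setting. Since $\alpha_{M'}$ commutes with $+$, scalar multiplication, $\wedge$ and $\vee$, we get $\alpha_{M'}(\sem{e}^*\rho)=\sem{e}\,(\alpha_{M'}\circ\rho)$ with $z$ read as $M'$. An induction along the run (over $t$, and over $s$ within each $t$) then shows that the concrete run over $\Sigma_{M'}$ makes the same choices. It remains to show that the solver calls agree. Each \emph{max-min} iterate has the affine form~\eqref{eq:affine} with $I_n-A$ invertible, and $\alpha_{M'}\circ\rho_{t,s}=(I_n-A)^{-1}b_{M'}$ is its unique finite solution. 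Because the inequality $\rho'\leq\rho_{t,s}$ holds symbolically, it transfers to $\alpha_{M'}$ by monotonicity. Hence $\alpha_{M'}\circ\rho_{t,s}$ is the least solution subsuming $\alpha_{M'}\circ\rho_t$: any smaller solution would have to involve $-\infty$ entries, and these are excluded by the lower bound $\rho'$, as in the bounded case.

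\textbf{Claim~2.} This follows from Claim~1 together with Theorem~\ref{t:minmax}. For finite $M'$ the system $\Sigma_{M'}$ is a bounded affine rational system (over the reals, the same argument applies), so \emph{max-min} policy iteration on it returns its least solution. By Claim~1 that run is exactly the image under $\alpha_{M'}$ of the symbolic run, so its result is $\alpha_{M'}\circ\rho_N$.

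\textbf{Claim~3.} This is the main obstacle, and it is where I expect the real work. First I show that $\alpha_\infty\circ\rho_N$ is a solution of $\Sigma_\infty$. The map $\alpha_\infty$ commutes with the operators under the conventions $\infty+c=\infty$ and $0\cdot\infty=0$; the delicate case is $\wedge$ where both arguments have $\eta>0$, since their order must be preserved in the limit. Alternatively, I can use Lemma~\ref{l:hom}(2) and Scott-continuity: $\alpha_\infty\circ\rho_N=\bigvee_{M'}\alpha_{M'}\circ\rho_N$ is a directed supremum of solutions of $\Sigma_{M'}$, and $\sem{e_x}_{M'}$ is monotone in $M'$ and jointly continuous, so passing to the limit gives a solution of $\Sigma_\infty$. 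For leastness, let $\rho^*$ be the least solution of $\Sigma_\infty$. For each finite $M'$ we have $\sem{e_x}_{M'}\leq\sem{e_x}_\infty$, so $\rho^*$ is a post-solution of $\Sigma_{M'}$. By Knaster--Tarski over the complete lattice $\RR$, this gives least solution of $\Sigma_{M'}$ $\leq\rho^*$, i.e.\ $\alpha_{M'}\circ\rho_N\leq\rho^*$ by Claim~2. Taking the supremum over $M'\geq M$ and applying Lemma~\ref{l:hom}(2) yields $\alpha_\infty\circ\rho_N\leq\rho^*$, and since $\alpha_\infty\circ\rho_N$ is itself a solution, equality holds. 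The step I would check most carefully is the continuity argument for $\wedge$ in the limit. If it fails in general, I would fall back on the explicit form $\rho_N=(I_n-A)^{-1}b$ for the final \emph{max-min} iterate and check directly that its $\alpha_\infty$-image satisfies each equation of $\Sigma_\infty$.
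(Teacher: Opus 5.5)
Your proof is correct. Parts~(1) and~(2) follow the paper's route: Lemma~\ref{l:hom} applied to the finitely many comparisons of the symbolic run, then Theorem~\ref{t:minmax} for the bounded system $\Sigma_{M'}$. You simply give more detail than the paper's one-line justification.

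One small tightening in part~(1): showing that no solution above $\alpha_{M'}\circ\rho'$ lies strictly below the finite one does not by itself give leastness. Add that a least solution above $\alpha_{M'}\circ\rho'$ exists, because that valuation is a pre-solution of the \emph{max-min} iterate. This least solution lies below the finite solution, so it is finite and hence equal to it.

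For part~(3) you take a genuinely different route.
\begin{itemize}
\item The paper compares Kleene iterations. It shows by induction on $j$ that $\sigma_{\infty,j}=\bigvee_{M\leq M'<\infty}\sigma_{M',j}$, using Scott-continuity jointly in the unknowns and in $z$. It then swaps the two suprema and invokes~(2) and Lemma~\ref{l:hom}(2). Since $\bigvee_j\sigma_{\infty,j}$ is the least solution of $\Sigma_\infty$ by Kleene's theorem, the solution property and minimality come out of one chain of equalities.
\item You split the claim. Minimality needs only monotonicity in $z$: any solution of $\Sigma_\infty$ is a post-solution of $\Sigma_{M'}$, so by Knaster--Tarski and~(2) it dominates $\alpha_{M'}\circ\rho_N$, and Lemma~\ref{l:hom}(2) passes to the supremum. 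The solution property needs at most one continuity step along the chain $M'\to\infty$.
\end{itemize}

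In fact the solution property needs no continuity step at all, and your worry about $\wedge$ is unfounded. The map $\alpha_\infty$ sends $(\eta,\beta)$ to $\infty$ if $\eta>0$ and to $\beta$ if $\eta=0$. It is therefore monotone, and a monotone map between linear orders commutes with binary minimum and maximum. Together with the conventions $\infty+c=\infty$ for $c>-\infty$ and $c\cdot\infty=\infty$ for $c>0$, and with $\alpha_\infty(1,0)=\infty$ matching $z\mapsto\infty$, it commutes with every operator that occurs. So $\alpha_\infty\circ\rho_N$ solves $\Sigma_\infty$ simply because $\rho_N$ solves $\Sigma$ over $\RR^*$, and your fallback to $(I_n-A)^{-1}b$ is not needed.

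Your decomposition is more elementary: it uses continuity once or not at all, and needs no interchange of suprema. The paper's argument is more uniform, obtaining both properties at once from part~(2) and Kleene iteration.
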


\begin{proof}
The first two statements of the theorem follow from Lemma \ref{l:hom}.
Now consider statement (3).
For each $M'\geq M$, we construct a sequence of valuations $\sigma_{M',j}, j\geq 0$,
defined by $\sigma_{M',0}= \{x\mapsto-\infty\mid x\in X\}$, and for $j>0$,
\[
\sigma_{M',j} = \{x\mapsto\sem{e'_{x}}\,(\sigma_{M',j-1}\oplus\{z\mapsto M'\})\mid x\in X\})
\]
where the operator ''$\oplus$'' is meant to
extend the mapping to the left with another argument-value pair.
For each $M'$, the sequence $\sigma_{M',j},j\geq 0,$ is ascending
where $\bigvee_{j\geq 0} \sigma_{M',j}$ equals the least solution of $\Sigma_{M'}$.
We claim that for all $j\geq 0$,
\[
\sigma_{\infty,j} = \mbox{$\bigvee_{M\leq M' < \infty}\sigma_{M',j}$}
\]
\ignore{
For a proof of this claim, we exploit that
all operators occurring in $\Sigma$ are \emph{Scott-continuous}, i.e.,
commute with least upper bounds of \emph{directed} subsets\footnote{
A subset $D$ is \emph{directed}, if for each $a,b\in D$, there is some $c\in D$ with
$a\leq c$ and $b\leq c$.}.
Since  projections and constant functions are Scott-continuous, 
and Scott-continuous functions are closed under composition,
expression evaluation for right-hand sides is Scott-continuous as well.
}
This claim is proven by induction on $j$.
For the base case $j=0$,
we have that $\sigma_{M',j}(x) = -\infty = \sigma_{\infty,j}(x)$ for all $x\in X$, and the assertion holds.
Now assume that the claim already holds for $j$. Then for every $x\in X$,
\[
\begin{array}{lll}
\bigvee_{M\leq M' < \infty}\sigma_{M',{j+1}} (x) 
&=& 
\bigvee_{M\leq M' < \infty}\sem{e_x}(\sigma_{M',j}\oplus\{z\mapsto M'\})	\\
&=&
\sem{e_x}(\bigvee_{M\leq M' < \infty}(\sigma_{M',j}\oplus\{z\mapsto M'\})) \quad\text{by Scott-continuity}	\\
&=&
\sem{e_x}(\sigma_{\infty,j}\oplus\{z\mapsto\infty\})) \quad\text{by induction hypothesis}	\\
&=&
\sigma_{\infty,j+1}(x)
\end{array}
\]
and the claim follows. By means of the claim, we verify that
\[
\begin{array}{lll@{\quad}l}
\bigvee_{j\geq 0} \sigma_{\infty,j} &=&
	\bigvee_{j\geq 0}\bigvee_{M\leq M'<\infty}\sigma_{M',j}		\\
&=&
	\bigvee_{M\leq M'<\infty} \bigvee_{j\geq 0} \sigma_{M',j} 	\\
&=&
	\bigvee_{M\leq M'<\infty} \alpha_{M'}\circ\rho_N 	&(*) 	\\
&=&	\alpha_\infty\circ\rho_N				&(**)
\end{array}
\]
Here, $(*)$ follows by statement (2) and $(**)$ by Lemma \ref{l:hom} (3).
\qed
\end{proof}
	
\noindent
By Theorem \ref{t:limit}, we thus have effectively reduced solving systems of unbounded affine rational equations 
to \emph{max-min} policy iteration for affine rational equations with symbolic upper bounds over $\RR^*$.

\ignore{
H: too complicated now to make it meaningful ... I am sorry!
\subsection{Algorithm and Example}
The $M$-bounded system is solved by min-policy iteration.
Each min-policy results in a linear system with symbolic $M$ appearing linearly.
Such a system may be split into two systems, one for terms in $M$ one for other terms; this again is equivalent to computing with pairs of rationals $(\alpha,\beta)$ instead of $\alpha M + \beta$.

These linear systems may be solved in exact precision by Gaussian elimination in the rational field, thus possibly using multi-precision arithmetic. For larger systems multi-modular and/or $p$-adic methods \cite{Dixon1982} are preferable; these methods solve systems modulo a prime number $p$, chosen so that computations fit within a machine word. Off-the-shelf libraries exist for such purposes, including
LinBox \footnote{\url{https://linalg.org/}} \cite{dumas:hal-02102080}
and Flint \footnote{\url{https://flintlib.org}} \cite{Flint}.
Note that exact computation makes it possible to use Strassen multiplication and other efficient yet numerically unstable approaches.

Take as an example the system
\begin{align}
  x & = 0 \vee \left((x+1) \wedge 1000\right)\\
  y & = (x+3) \wedge 2000
\end{align}
which could be used to analyse
\begin{lstlisting}
x=0;
while(*) {
  x=x+1;
  if (x > 1000) abort();
}
if (x+3 > 2000) abort();
y = x+3;
\end{lstlisting}

The system is turned into
\begin{align}
  x & = -\infty \vee (0 \wedge M )\vee \left((x+1) \wedge 1000 \wedge M\right)\\
  y & = -\infty \vee \left((x+3) \wedge 2000 \wedge M\right)
\end{align}
which, taking into account that $M \rightarrow +\infty$ is simplified into
\begin{align}
  x & = -\infty \vee 0 \vee \left((x+1) \wedge 1000\right)\\
  y & = -\infty \vee ((x+3) \wedge 2000)
\end{align}
For $\pi=\{x,y\mapsto 0\}$, 
we obtain as initial values for $x,y$,
$(x,y) = (-\infty,-\infty)$.
The next simplified system is given by $(x,y)=\left(0,(x+3) \wedge 2000\right)$ 
with greatest solution $(0,3)$. 
Accordingly, we arrive at
$(x,y) = \left((x+1) \wedge 1000, (x+3) \wedge 2000\right)$.
We start \emph{min}-policy with the initial values $(x,y) = (1000,2000)$.
Evaluating the corresponding right-hand sides for that, we find 
$\left((1000+1) \wedge 1000, 1003 \wedge 2000\right) = (1000,1003)$ where $1003 < 2000$. 
We thus flip the choice at the minimum in the right-hand side of $y$
to obtain the system $(x,y) = (1000, x+3)$, which evaluates to $(1000,1003)$.
}

\section{Witnesses}\label{sec:witness}

\ignore{
We propose two extensions to our approach for solving max-min systems: 
one that produces \emph{witnesses} that a solution is correct and/or optimal, and 
another for analyzing systems with parameters. 
%
%
In fact, the \emph{lazy approach} for parameterized systems will be based on 
witnesses of optimality. 
Parametric analysis can be used to implement forms of modular analysis.

\subsection{Witnesses of Solution of Max-Min Systems}
\label{sec:witness}
}

The search for the least solution of a system of affine rational equations, like in general finding
least inductive invariants in complicated abstract domains, is an expensive and complex task. 
We ask how this process can log a \emph{witness} that justifies that the result of the process 
is \emph{correct} (it is a solution) and \emph{optimal} (it is the least one). 
A witness $W$ for valuation $\rho$ should provide enough information that a 
\emph{verifier} can easily check that $\rho$ has the desired property.
For the construction of witnesses and their verifiers, we try to capture three ideas: 
\begin{itemize}
\item	The verifier should have \emph{limited} computational power (time and memory); 
\item 	it should be \emph{dissimilar} to the solver; and
\item	it should be \emph{simple} enough so that the implementation is easy to review or even 
	to formally be proven correct. 
\end{itemize}
Limited computational power means that a run by the verifier should be significantly less expensive 
than solving the problem from scratch. We make this idea precise through the complexity class \NCone.
{\NCone} is the class of problems decidable by LOGSPACE-uniform Boolean circuits with a polynomial number 
of gates of at most two inputs and depth $O(\log n)$, or 
decidable in time 
$O(\log n)$ on a parallel computer with a polynomial number of processors.

An extreme way of constructing witnesses is to log a full execution trace of the solver.
Verification of the witness then boils down to verifying that each step of the solver has been 
correctly executed -- which can be checked in \NCone. 
The hidden assumption, however, is that the solver itself was correct, as any bug in the solver 
would go unnoticed --- dooming both the produced witness and its verification as incorrect (in some cases, at least).
Witness construction, therefore, should omit as many details of the solver as possible.

%
%
\ignore{
If \emph{min}-policies are succinctly represented, for instance when using a linear programming 
operator in the right-hand side, then extra hints must be logged to certify some upper bound to
the evalution of such an operator (when validating a post-solution) or its precise results
(when validating a solution).
%
We may also perhaps want to certify that a valuation is a \emph{solution} of the system.
In Section~\ref{sec:succinct-min-policies} we explained how this hint for being a solution 
can be a
Farkas witness that this right-hand side is implied by the linear programming problem, or 
equivalently can be the basis identifying the optimum. Checking such a witness involves just 
linear algebra (multiplications and possibly the solving of linear systems), which, again, 
can be performed in exact precision using high-performance 
libraries \cite{Flint,dumas:hal-02102080}.
%


Witnessing that a correct solution is the \emph{least} correct solution is more intricate.

A naive approach consists in recording the full computation trace --- which by an appropriate checker
can then be verified step by step. This approach, though, has several severe drawbacks:
\begin{itemize}
\item	Checking is equally, if not more expensive than computing the optimal solution from
	scratch;
\item	As checking is not independent of the solving routine, it fails to detect when
	the original solver was flawed.
\end{itemize}
}

Let us again consider a rational affine system (not necessarily bounded),
and a valuation $\rho$.
For the valuation $\rho$ to be \emph{correct}, i.e., a solution of the system,
it is sufficient for each unknown $x$, to compare the value $\rho(x)$ with the 
value returned by the right-hand side of $x$ for $\rho$.
Expression evaluation itself can be realized within \NCone,
no further witness thus is required.

The situation for optimality of $\rho$ is more intricate.
Our proposed witness format for optimality records a \emph{footprint} of the solver's computation so that the checker 
still can \emph{easily} fill in the remaining parts. 
For the construction of this footprint, we go back to the corresponding system $\Sigma$
with symbolic upper bounds and record
the sequence $\pi_0,\ldots,\pi_N$ of \emph{max}-policies encountered by \emph{max}-policy iteration for $\Sigma$.
%
For each encountered \emph{max}-policy $\pi_t$, 
we furthermore record the finite solution $\rho_t$ over $\RR^*$
of the respective \emph{max}-iterate.

Let $W = (\pi_0,\rho_0)\ldots(\pi_N,\rho_N)$ denote a witness.
The algorithm checking $W$ is a witness for optimality of the valuation $\rho$ works at follows:
\begin{enumerate}
\item	It checks that $\pi_0$ is the initial policy, and that 
	$\rho_N$ is a solution of the given system with symbolic upper bounds
	where $\rho = \alpha_\infty\circ\rho_N$;
\item 	For each $t=0,\ldots,N$, it checks that $\rho_t$, 
	is a solution of the system $\Sigma^{\pi_t}$;
\item 	For each $t=0,\ldots,N-1$, it checks that $\pi_{t+1}$ is a
	reluctant improvement of $\pi_t$ (given the valuation $\rho_t$).
\end{enumerate}
For that certification to work, we rely on the observation that over $\RR^*$, 
each $\Sigma^{\pi_t}$ has a unique finite solution since for every finite bound $M$, there is one.
In this way, we have reduced checking of optimality to a sequence of checks of candidate 
solutions together with a sequence of local checks that policies have correctly been improved.
What we could purge completely is \emph{how} the \emph{max}-iterates $\Sigma^{\pi_{t+1}}$ 
have been solved.

Each of the tasks (1), (2) and (3) can be performed in \NCone.
Optimality of $\rho$ then follows from Theorem \ref{t:limit}. Therefore, we obtain:
\begin{theorem}\label{t:witness}
  The given witness of optimality of a valuation for a system of affine rational equations can be 
  verified in \NCone.
\qed
\end{theorem}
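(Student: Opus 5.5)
To prove Theorem~\ref{t:witness}, I will show that each of the three checks listed above is an \NCone{} computation. The overall conjunction is then \NCone{} as well: the checks concern polynomially many pairs $(\pi_t,\rho_t)$, and one can take the AND of polynomially many \NCone{} circuit outputs in logarithmic depth by a balanced binary tree. The witness has size polynomial in the system, since $N$ is bounded by the number of distinct max-policies visited, and each entry of $\rho_t$ is a pair $(\eta,\beta)$ of rationals of polynomial bit size, being obtained via \eqref{eq:solution} by Cramer's rule. Soundness of the verifier, namely that acceptance implies optimality, I take from Theorem~\ref{t:limit} together with the uniqueness of finite solutions of each $\Sigma^{\pi_t}$ over $\RR^*$. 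The latter holds because for every sufficiently large finite $M$ the image under $\alpha_M$ is the unique finite solution of a concave max-iterate, by Lemma~\ref{l:hom}. Consequently the checked sequence coincides with the one max-min policy iteration would produce, and $\alpha_\infty\circ\rho_N$ is the least solution of $\Sigma_\infty$.

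The core step is to show that evaluating a right-hand side $\bigvee_j\bigwedge_i e_{x,j,i}$ at a valuation over $\RR^*$ is in \NCone. Each affine term $\sum_{x'} a_{x'}\cdot x'+b$ is evaluated componentwise on the pairs $(\eta,\beta)$ with $-\infty$ propagated. Iterated addition of polynomially many rationals and products of pairs of rationals lie in \NCone. For the rational sums I use iterated multiplication to form a common denominator; this is in \NCone{} by the Beame--Cook--Hoover/Chiu--Davida--Litow results on iterated integer multiplication. Comparisons in the lexicographic order on $\RR^*$ are comparisons of rationals, i.e.\ cross-multiplication followed by integer comparison, which is in \NCone. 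Computing a minimum or maximum of polynomially many values is done by evaluating all pairwise comparisons in parallel and selecting the element that beats all others, which keeps the depth logarithmic.

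With evaluation in hand, the individual checks follow. Check~(1) compares $\pi_0$ with the all-zero policy, verifies $\rho_N(x)=\sem{e_x}^*\,\rho_N$ for all $x$, and computes $\alpha_\infty$, which is trivial: it maps $(\eta,\beta)$ to $\infty$ if $\eta>0$ and to $\beta$ otherwise. Check~(2) verifies $\rho_t(x)=\sem{e_{x,\pi_t(x)}}^*\,\rho_t$ for all $t$ and $x$, in parallel. Check~(3) recomputes, for every $x$, whether $\rho_t(x)\neq\sem{e_x}^*\,\rho_t$; if so, it verifies that $\pi_{t+1}(x)$ attains the maximum among the values $\sem{e_{x,j}}^*\,\rho_t$, and otherwise that $\pi_{t+1}(x)=\pi_t(x)$. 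It also checks that some $x$ changed for $t<N$.

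The main obstacle I expect is the arithmetic-complexity bookkeeping. I must show that the rationals in the witness, together with every intermediate value, have polynomial bit length, so that the iterated arithmetic stays within \NCone{} rather than only within TC$^0$ or NC$^2$. Because only the single evaluation of an affine expression is performed, with no repeated squaring or iteration, the bit growth is additive and this goes through. I will state explicitly that the witness sizes are part of the input length, so polynomiality is measured relative to the system plus the witness.
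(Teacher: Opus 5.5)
Your proposal follows the paper's argument: show that each of checks (1)--(3) lies in \NCone{}, and obtain optimality from Theorem~\ref{t:limit}. It also supplies the arithmetic details the paper leaves implicit: common denominators via iterated multiplication, lexicographic comparison on $\RR^*$, and $\wedge$, $\vee$ via parallel pairwise comparisons.

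You are right to count the witness as part of the input. Your earlier claim that the witness is polynomial in the system alone is unjustified, since the number of visited \emph{max}-policies need not be polynomial.

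One caveat, which you inherit from the paper, concerns your soundness step. Concavity makes a solution of $\Sigma^{\pi_{t+1}}$ unique only among those dominating $\rho'_{t+1}$, not among all finite solutions. For example, $x = z\wedge(0.5\cdot x+2)\wedge(2\cdot x+3)$ has the two finite solutions $(0,4)$ and $(0,-3)$ over $\RR^*$. So the verifier should additionally check $\rho_{t+1}(x)\geq\sem{e_x}^*\,\rho_t$ for all $x$ and all $t<N$; this check is again in \NCone.
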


\ignore{
\begin{proof}
  We need to check that certain rational values validate equations of the form $x = y @ z$ where $@$ is an operator ($\min$, $\wedge$, $+$, $\times$) and $y$ and $z$ are constants or variables.
  Express all values as fractions (numerator, denominator).
  $x'/x'' = y'/y'' + z'/z''$ is equivalent to
  $x'/x'' = \frac{y'z''+z'y''}{y''z''}$; and similarly for other operators.
  Two fractions $x'/x''$ and $y'/y''$ are equal if and only if $x'y'' = y'x''$.
  Integer comparison, addition and multiplication can be performed in \NCone.%
\footnote{They can actually be performed in \compclass{TC\textsuperscript{0}} \cite[Th.~2]{chiang2025transformers}. Here are elementary proofs for \NCone. Integer comparison is in \NCone: a balanced binary tree of ``and'' gates and individual bit comparisons at the leaves. So is integer addition: a \emph{carry-select adder} divides numbers of $n$ bits in halves, adds the two bottom halves, and in parallel the two top halves with incoming carry 0 and 1 (thus three adders for $n/2$ bits), then selects the correct top half; it has $1+\log_2 n$ depth and $\Theta(n^{\log_2 3})$ gates. Multiplication consists in multiplying the first operand by the individual bits of the second operand, then summing all results into two numbers using a binary tree of 4-2 ``compressor adders'', and finally summing the last two numbers.}
\end{proof}
}

\ignore{
Note the similarity to what happens in reachability analysis (explicit-state model-checking, reachability in a directed graph). In reachability analysis, we can certify that a set of states contains the reachable states (the least inductive invariant) by showing that it contains the initial states and is stable by the transition relation~$\tau$. This is analogue to what happens in our case when we exhibit a max-policy and check that it needs no improvement (defines an inductive invariant).

In order to establish that a set of states $R$ is exactly the set of states reachable from $I$, we can exhibit a sequence $I=S_0 \subseteq S_1 \subseteq \dots \subseteq S_R$ of sets of states such that, for all $i$, for each state $s'$ is $S_{i+1} \setminus S_i$, there is a state $s$ in in $S_i$ such that $\tau(s,s')$ (this is what happens with breadth-first computation of the reachable states). This is analogue to what happens with our sequence of max-policies: the next policy is shown to be ``necessary'' because the value from the preceding policy leads to it.
This analogy can be made precise. We can exactly simulate the behavior of this breadth-first computation of reachable states as follows: if $I \subseteq \Sigma$ is the set of initial states and $\tau \subseteq \Sigma \times \Sigma$, then to each $\sigma \in \Sigma$ we attach a real variable $x_\sigma$, with equations defined as $x_\sigma = 0$ for $\sigma \in I$ and $x_{\sigma'} = -\infty \vee \bigvee_{\sigma \mid \tau(\sigma,\sigma')} x_{\sigma}$. Max-policy iteration will compute the least fixpoint of this equation system: $x^*_{\sigma}=0$ if $\sigma$ is reachable, $-\infty$ otherwise. The successive policies correspond to the sets $S_i$: they indicate, for each state $\sigma'$ known to be reachable at this stage of the iteration, a state $\sigma$ known to be reachable at the preceding iteration.
}

\smallskip
\noindent
Our witness format for optimality seems cumbersome, 
as it requires the enumeration of encountered \emph{max}-policies.
Here, we give a complexity-theoretical explanation why it is unlikely that much shorter witnesses can be used.
Consider a restricted version 
of the problem, namely, minimal solutions to systems of \emph{propositional Horn clauses}.
A propositional Horn clause is a disjunction of literals of which at most one has a positive occurrence. 
A propositional Horn clause that contains no positive literal is deemed \emph{purely negative}. 
A non-purely negative Horn clause (NPNHC) is thus of the form $\bar{A}_1 \lor \dots \lor \bar{A}_n \lor C$ 
($A_1,\dots,A_n,C$ are variables, $n$ can be null). Such a clause is denoted as 
$C\leftarrow A_1 \land \dots \land A_n$, 
where $C$ is called the \emph{consequent}.
Grouping all non-purely negative Horn clauses with the same consequent $C$, 
we obtain a system of implications of the form 
$L\leftarrow \big((A_{1,1} \land \dots \land A_{1,n}) \lor \dots \lor (A_{m,1} \land \dots \land A_{m,n})\big)$
with distinct $C$ ($m$ can be null when no clause has $C$ as a consequent).
By casting the Boolean values \textbf{false} and \textbf{true} as $-\infty$ and some finite constant $M$, respectively,
and logical conjunction and disjunction as minimum and maximum, respectively, we can consider each such
system as a system of equations over $\{-\infty,M\}$.
A \emph{solution} of such a system is thus a post-fixpoint of an operator mapping a valuation of the 
variables to the corresponding values of the left-hand sides of these implications.
\ignore{
By Tarski's fixed point theorem, this system has a minimal solution (for the product ordering arising from 
$-\infty < \infty$), 
which is also the minimal solution of a system of the form 
$C = \big((A_{1,1} \land \dots \land A_{1,n}) \lor \dots (A_{m,1} \land \dots \land A_{m,n})\big)$ over the reals.

We said in Section~\ref{sec:witness-correct} that it is easy to check that a valuation is truly a solution. In the case of propositional Horn clauses, this translates into this complexity-theoretical statement:

\begin{lemma}\label{lem:inductiveness-in-nc}
  Given a system of NPNHC $H$ and a valuation $\rho$, checking that $\rho$ is a solution of $H$ is in \NCone.
\end{lemma}

\begin{proof}
  The left-hand side of each Horn clause $A_1 \land \dots \land A_n \implies C$ can be seen as a binary tree of $\land$ of logarithmic depth.
  Apply this process in parallel to all clauses and check that each implication holds.
\end{proof}
}
While checking that a valuation from unknowns to $\{-\infty,M\}$ is a solution, is in \NCone,
checking that a valuation is the \emph{least} solution is \compclass{P}-complete:%
\footnote{We thank Emil Jeřábek for this proof. Note that if the right-hand sides contain only $\vee$ but no $\wedge$,
the problem becomes a reachability problem in an explicitly represented transition system, an equation 
$d = s_1 \vee \dots \vee s_n$ meaning $n$ transitions $s_i \rightarrow d$. 
This problem, known as STCON, instead is \compclass{NL}-complete.}

\begin{theorem}\label{th:reachable-p-complete}
Given a system of NPNHC $H$ and a valuation $\rho$,
checking that $\rho$ is the \emph{least} solution of $H$ is \compclass{P}-complete 
(with respect to log-space reductions).
\end{theorem}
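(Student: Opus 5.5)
We split the claim into membership in \compclass{P} and \compclass{P}-hardness under log-space reductions.

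\emph{Membership.} Given $H$ and $\rho$, we compute the least solution of $H$ by the standard forward-chaining (unit propagation) algorithm. Start with all variables \textbf{false} ($-\infty$). Repeatedly set a consequent $C$ to \textbf{true} ($M$) whenever some clause $C\leftarrow A_1\land\dots\land A_n$ has all premises already true. Each pass either adds a new true variable or stops, so there are at most as many passes as variables, each of polynomial cost. The result is the least fixpoint of the monotone operator, i.e.\ the least solution (Kleene iteration on the finite lattice $\{-\infty,M\}^X$). We then compare it with $\rho$ componentwise. Hence the problem is in \compclass{P}.

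\emph{Hardness.} We reduce from the \compclass{P}-complete problem HORN-SAT in its ``derivability'' form: given a set of NPNHC $H$ and a variable $G$, is $G$ true in the least model? (Equivalently, one can reduce from the monotone circuit value problem, which is \compclass{P}-complete.) The difficulty is that our problem takes a \emph{full} candidate valuation $\rho$ rather than a single query bit. So we must build, in log-space, an instance $(H',\rho')$ such that $\rho'$ is the least solution of $H'$ iff $G$ is derivable in $H$. The construction:
\begin{enumerate}
\item Let $H'$ consist of $H$ together with the clauses $V\leftarrow G$ for every variable $V$ of $H$.
\item Let $\rho'$ be the all-\textbf{true} valuation.
\end{enumerate}
If $G$ is derivable in $H$, it is derivable in $H'$, and then every variable becomes true, so $\rho'$ is the least solution. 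Conversely, suppose $\rho'$ is the least solution of $H'$. Then $G$ is true in the least model of $H'$. Any derivation of $G$ in $H'$ that uses a new clause $V\leftarrow G$ must have derived $G$ earlier, so the first derivation of $G$ uses only clauses of $H$. Hence $G$ is derivable in $H$.

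The reduction only copies $H$ and emits one clause per variable, together with a constant valuation, so it is clearly log-space.

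The main obstacle is the choice of source problem: we need one whose \compclass{P}-hardness under log-space reductions can be cited, such as Horn-satisfiability / derivability or monotone circuit value. If that citation is unavailable, we instead encode a monotone circuit value instance directly: gates become variables, AND gates become single clauses, OR gates become multiple clauses, inputs with value 1 become facts, and the output gate plays the role of $G$. The rest of the argument is unchanged. We should also check the corner case of purely negative clauses: they are excluded from NPNHC, so they need no treatment.
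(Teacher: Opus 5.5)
Your proof is correct and follows essentially the paper's route. The paper reduces from Horn unsatisfiability: it first turns unsatisfiability into derivability of a fresh variable $f$ by adding $f$ to every purely negative clause, then adds the clauses $A \leftarrow f$ for all $A$, so that the all-\textbf{true} valuation is the least solution iff $f$ is derivable; this is exactly your flooding construction, with your $G$ in place of $f$. Your version starts one step later, directly from goal derivability or monotone circuit value, and also spells out membership in \compclass{P}, which the paper leaves implicit.
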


\begin{proof}
  The proof is by reduction from the problem of whether a system of Horn clauses $H$ can deduce the empty clause. 
  The latter problem is known to be \compclass{P}-complete
  (by reduction from circuit value \cite[p.~213]{Cook_Nguyen}).
  We add to $H$ an extra literal $f$, and 
  change each purely negative clause $\bar{A}_1 \lor \dots \lor \bar{A}_n$ into 
  $\bar{A}_1 \lor \dots \lor \bar{A}_n\lor f$. In this way, we obtain a system $H'$
  consisting of a conjunction of non-purely negative Horn clauses only.
  Then $f$ is mapped to \textbf{true} (i.e., to $M$) by the least solution of $H'$ if and only if 
  $H$ has no solution (which is equivalent to being able to derive the empty clause).
  By adding to $H'$ clauses $\bar{f} \lor A$ for all literals $A$, we obtain a set $H''$
  of Horn clauses so that 
  the minimal model of $H''$ assigns \textbf{true} to all literals iff $H$ is unsatisfiable,
  i.e., can deduce the empty clause.
  \qed
\end{proof}
Under the usual conjecture that 
\NCone is strictly included in \compclass{P}, 
we thus cannot hope to verify in {\NCone} that a valuation is the least solution of a system of equations.

\ignore{
Let us now go back to linear numerical systems (not just Booleans).  A fortiori, we cannot verify in \NCone, in the more general numerical case, that a valuation is truly a minimal solution of a min-max affine linear equation system.
However, the certificates formats that we use (which, in the case of minimality, embark a trace, not just the solution), can be checked in \NCone.
}

\section{Solving in Floating-Point, for Floating-Point}
\label{sec:floating-point}

\emph{Max-min} policy iteration for systems of (bounded or unbounded) affine equations assumed exact (rational) 
arithmetic. In this section, we give several approaches for 
solving systems using floating-point arithmetic.
%
There are two reasons why this is important.
First, solving many subproblems in exact precision arithmetic may be expensive, 
even despite the use of efficient algorithms that eschew Gaussian elimination in arbitrary precision 
rational arithmetic \cite{Flint,dumas:hal-02102080,Dixon1982}.
Second, often the programs under analysis use floating-point semantics for arithmetic.
In this case, one can abstract it into a program over the rational numbers, liable to analysis 
by a tool itself implemented using multiprecision rational numbers, as follows \cite[\S7.4]{Miné_PhD}:%
\footnote{Here, we assume that overflows and the generation of $\pm\infty$
are handled by the insertion of appropriate \emph{if-then-else} statements.}
a floating-point addition $x \,+^\sharp\, y$ yields $x + y + \epsilon_{x+y}$ where $\epsilon_{x+y}$ is chosen nondeterministically so that $|\epsilon_{x+y}| \leq \epsilon_r |x + y|$ where $\epsilon_r$ is the \emph{relative error} coefficient;
a floating-point multiplication $x \,*^\sharp\, y$ yields $x \cdot y + \epsilon_{x \cdot y}$ where $\epsilon_{x \cdot y}$ is chosen nondeterministically so that $|\epsilon_{x \cdot y}| \leq \epsilon_r |x \cdot y| + \epsilon_a$ where $\epsilon_a$ is the \emph{absolute error} coefficient.%
\footnote{The absolute error is due to underflow, that is, rounding to 0 of values close to 0. In IEEE-754 floating-point, one can show that there is no absolute error on addition and subtraction.}
This abstraction however may cause scalability issues:
$\epsilon_r$ and $\epsilon_a$ have large denominators, and this will tend to generate huge numerators and 
denominators in the rational analysis.

We have investigated three approaches based on \emph{max}-policy iteration
which instead directly work with floating point arithmetic.
The three approaches differ in how they deal with encountered \emph{max}-iterates 
$\Sigma^\pi$ ($\pi$ some \emph{max}-policy):
\begin{description}
\item[\textbf{Max+LP}] solving $\Sigma^\pi$ by calling an external linear programming solver;
\item[\textbf{Max+Min}] solving $\Sigma^\pi$ by \emph{min}-policy iteration using Gaussian elimination;
\item[\textbf{Max+Val}] solving $\Sigma^\pi$ by descending value iteration.
\end{description}
The method based on \emph{max-min} policy iteration may also be combined with symbolic upper bounds.
To avoid extra complications due to infinities for LP solving, we concentrated on \emph{bounded} systems.
Subsequently in this section, 
we therefore consider \emph{finite} values only, i.e., values $v\in\RR$ with $v <\infty$.

The natural approach for solving systems with right-hand sides \eqref{rhs:max_min}
and expressions $e_{x,j,i}$ built up by addition and multiplication with positive constants,
is to replace each equality $y = x_1 \wedge \dots \wedge x_n$ by inequalities 
$y \leq x_1$, \dots, $y \leq x_n$ and to maximize for all such $y$, 
which itself is equivalent to maximizing 
the \emph{sum} of all variables \cite[Section~2]{DBLP:conf/csl/GawlitzaS07}.
Such a problem can be solved by a linear programming (LP) solver.

Most off-the-shelf solvers are implemented with floating-point arithmetic.
%
%
There is no guarantee that the results will satisfy the hypotheses needed for ascending policy iteration.
During some of our experiments, \textbf{Max+LP} ended up cycling between policies. 
Interestingly, different solvers may even behave differently here (see Section~\ref{sec:experiments}),
and there is no guarantee that the computed valuation is a solution, i.e., provides us with an inductive invariant 
for the floating-point program under analysis.
In contrast, non-termination was neither observed for
\textbf{Max+Min} nor for \textbf{Max+Val}.

Regarding \textbf{Max+Min},
Gaussian elimination over floating-point numbers may, at least in principle, result in imprecision and 
be computationally unstable.
Therefore, we experimented with two-stage approaches (\textbf{Max+LP+Val}, \textbf{Max+Min+Val}) where the first computes approximate invariants
which are then in the second stage, \emph{repaired} 
by running \emph{value iteration} until a fixpoint was reached.%
\ignore{
%
%
Note that this approach differs from that \citet{Miné_PhD,DBLP:journals/lisp/Mine06} 
applied to the octagon abstract domain. He implemented abstract transformers in floating-point 
using rounding to $+\infty$ to soundly over-approximate an ideal computation over the reals \cite[\S7.5.3]{Miné_PhD}.
One difficulty that he faced was that, due to rounding issues, standard widening operators could fail to induce convergence and he had to use ``perturbated'' versions that would ``fatten'' the invariant candidates to ensure convergence \cite[\S7.5.4]{Miné_PhD}. Instead of such a ``fattening'' step we use a limited number of value iteration steps.
}
Experiments showed that the overhead incurred by these additional iterations was mostly negligible.
In fact, for \textbf{Max+Min+Val}, subsequent value iterations were never required.
This was different in \textbf{Max+LP+Val} --- although there 
a limited number of value iterations sufficed, except in a few exceptional cases.
In these cases, the floating-point LP solver had produced a negative value $-\epsilon$ close to $0$, 
whereas the \emph{true} solution was $0$. Then value iteration would compute an \emph{ascending} sequence 
converging to $0$ with a sequence of floating point values $m_k2^{-e_k}$ with $-e_k$ going through the negative 
exponents up to the minimum exponent $-1022$, and then some denormal numbers and only then flush to zero. 
This used $1000$ iterations for no good reason.
In contrast, convergence to nonzero values went through a number of iterations seemingly proportionate to the number 
of bits in the mantissa of the numbers ($53$).

We leave it to future work to formally prove 
geometric convergence of value iteration in \textbf{Max+Val} 
for solving the encountered \emph{max}-iterates.
%
For that, results from Perron-Frobenius theory \cite{Boyle_Perron_Frobenius} would have to be
extended from irreducible matrices to our more general situation.

\ignore{
(see proofs in Appendix~\ref{appendix:nonnegative_matrices}) we deduce the following 
result on the convergence speed of value iteration when solving a linear system without minimum or maximum operators
depending on the \emph{spectral radius} of the matrix, that is, the largest modulus of eigenvalues:
\ignore{
%
%
\begin{theorem}\label{th:contracting}
  Let $A$ be a nonnegative matrix with finite entries,
  and $f:\RR^n\to\RR^n$ be defined by $f(x)=Ax+b$ for some vector $b\in\RR$ as well with finite entries only.
  Let $x_0\in\RR^n$ a finite vector such that $x_0 \leq f(x_0)$,
  and assume that $f$ has a single finite fixpoint $x_f\geq x_0$.
  Then there is a subset $L\subseteq\{1,\ldots,n\}$ the coordinates can be split into a set $L$ such that the restrictions over coordinates in $L$ $x_0^{|L}$ and $x_f^{|L}$ are equal, while over the complement $L^C$, $x_0^{|L^C} \triangleleft x_f^{|L^C}$.
  Then, on the coordinates in $L^C$: for any $x$ there exists $\alpha_x$ such that $\| f^k(x) - x_f \|_\infty \leq \alpha_x \rho^k$ where $\rho < 1$ is the spectral radius of $A$ restricted to $L^C$.
\end{theorem}
}

%
\begin{theorem}\label{th:contracting_min}
  Consider a function $f:\RR^n\to\RR^n$ defined by
  $f(x)=\bigwedge_{i\in I} f_i(x)$ for some set $I$ where for each $i\in I$,
  $f_i(x)=A_i x+b _i$ for matrices $A_i$ and vectors $b$ with finite entries such that for all finite $x\in\RR^n$, 
  $f(x) = A_i x+b _i$ for some $i\in I$.
  
  TODO: why this generality? wouldn't a finite $I$ do?
  
  Let $x_0$ denote a finite vector such that $x_0 \leq f(x_0)$ and $f$ has a single finite fixpoint $x_f$ above $x_0$.
  Then the coordinates can be split into a set $L$ such that the restrictions over coordinates in $L$ $x_0^{|L}$ and $x_f^{|L}$ are equal, while over the complement $L^C$, $x_0^{|L^C} \triangleleft x_f^{|L^C}$.
  
  TODO: do we need a discussion of $\triangleleft$ etc.?

  Let $y_k$ be a descending value iteration sequence defined as $y_{k+1} = f(y_k)$.
  Then there exists $\alpha$ such that $\| y_k - x_f \|_\infty \leq \alpha \rho^k$ where $\rho < 1$ is the spectral radius of some $A_i$ restricted to $L^C$.
\end{theorem}

Of course, there could be cases where the spectral radius $\rho$ would be close to $1$ and convergence would be slow. 
In this case, there is always the option to switch to \emph{min}-policy iteration for solving the
corresponding systems.
}


\section{Experiments}\label{sec:experiments}

We have conducted experiments with the approaches \textbf{Max+LP}, \textbf{Max+Min}, and
\textbf{Max+Val} for solving bounded systems of affine floating-point equations. 
In order to generate many examples of arbitrary sizes, we use a random generator of analysis problems. This generator creates $n$ unknowns $x_i$, each defined (depending on a random choice) as an expression: either a random number in $[-1000,1000]$, or $x_j \lor x_k$, or $x_j \land x_k$, or $x_j + x_k$ where $x_j$ and $x_k$ are two randomly chosen unknowns, or $k \times x_j$ where $k$ is a randomly chosen constant in $[0,3]$ and $x_j$ a randomly chosen unknown, or $k_1 + x_j \lor k_2$ where $k_1$ and $k_2$ are constants in $[0,3]$ and $[-100,100]$, or $k_1 + x_i \lor k$ where $k$ is a constant in $[-100,100]$. The last case is meant to force the introduction of looping structures, as they happen in systems extracted from programs.

\begin{figure}
\includegraphics[width=0.5\textwidth]{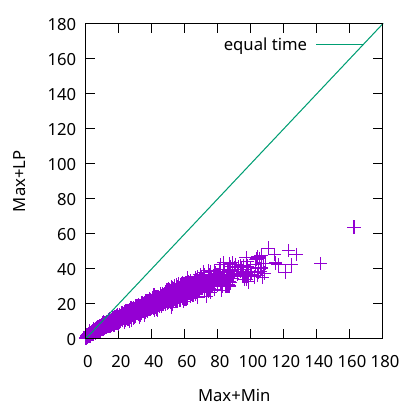}%
\includegraphics[width=0.5\textwidth]{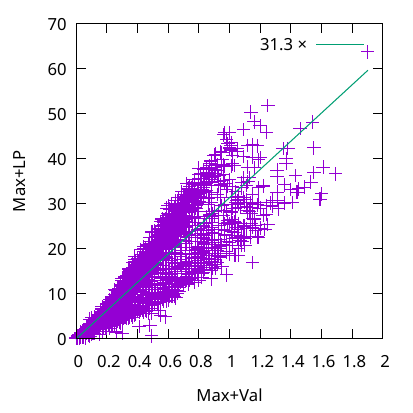}
\caption{Comparison of timings (seconds) for 3800 bounded examples with 
$[200,3999]$ unknowns for the ($x$ axis) \textbf{Max+Min} (left), or \textbf{Max+Val} (right), and \textbf{Max+LP} using Gurobi ($y$ axis). Roughly, \textbf{Max+Min} is $2\times$ slower than \textbf{Max+LP} (which may be explained by a naïve implementation), but \textbf{Max+Val} is $31 \times$ faster than \textbf{Max+LP}.}
\label{fig:times_minmax_lp}
\label{fig:times_maxval_lp}
\end{figure}

%
We implemented the outer \emph{max}-policy iteration, the inner \emph{min}-policy as well as value iteration in \textsc{Ocaml 5.3}.
For linear programming
we used two mature, off-the-shelf solvers, namely, GLPK 5.0 (\url{https://www.gnu.org/software/glpk/}), 
which is free software, and Gurobi 12.0 (\url{https://www.gurobi.com}), a proprietary, closed-source, high-performance, commercial package. 
Both were used through a common \textsc{OCaml} interface, LP 0.4 
(\url{https://ocaml.org/p/lp/0.4.0}).
The \emph{max-min} policy iteration algorithm 
requires for each encountered pair of a \emph{max}- and \emph{min}- policy to solve an affine system of
equations.
We solve these systems by Gaussian elimination, each from scratch.
An advanced implementation, however, may track dependencies between unknowns and thus perhaps
avoid recomputing unchanged values%
\footnote{In the same way that a LP solver does not recompute the full tableau when pivoting the basis.}.
Also, instead of using our own implementation, we could have relied on some external 
state-of-the-art solver for sparse systems.
Yet, despite this na\"ive implementation, the timings for \textbf{Max+Min} and \textbf{Max+LP}
differ only by a small factor (Fig.~\ref{fig:times_minmax_lp}), even though the LP solver 
is highly optimized.
Also, while we had no difficulty running \textbf{Max+Min}, we had some issues with \textbf{Max+LP}:
\ignore{
%
%
One is technical: proper detection of which variables go to 
$+\infty$ (by exhibiting, when the linear program is unbounded, a ray $o + \mathbb{R}_+ v$) 
requires information not available through the interface (one would have to query the last 
primal simplex base reached and, see which nonbasic variable(s) is allowed to grow indefinitely, 
and for each such variable get a vector $v$ as a column). We thus detect infinities when values 
cross a large bound ($200000$); this is just for purposes of experimentation.
}
%
we found examples with few variables where 
numerical round-off issues cause \textbf{Max+LP} to cycle between \emph{max}-policies. 
We implemented various measures based on $\epsilon$ thresholds to avoid that unappreciated non-termination. 
Gurobi was quite more reliable than GLPK in this respect.
In almost all experiments, the value iteration in \textbf{Max+LP+Val} or \textbf{Max+Min+Val} has barely discernible overhead.
We also compared these algorithms with \textbf{Max+Val} where value iteration (either in a round-robin fashion
or by a worklist) is used for solving systems without maximum operators.
As shown by Figure~\ref{fig:times_maxval_lp}, \textbf{Max+Val}
vastly outperforms (by a speed factor of about $30$) \textbf{Max+LP} 
-- even when using a state-of-the-art linear programming solver (Gurobi).

In our experiments, we checked that the different approaches yielded comparable results. 
In almost all cases the results are identical, 
and in few cases they differed by negligible amounts ($< 10^{-11}$).

\newcommand{\D}{{\mathbb D}}

\section{Extension: Template linear constraints}
%
%
\label{sec:extensions}
The systems of equations over rational or floating-point numbers which we have considered so far
support, besides maximum and minimum operators only multiplication with positive constants and addition.
A precise \emph{bounds} analysis, however, additionally requires the operator ``?'' 
from Section \ref{sec:background} as well as the \emph{sequence} operator ``;'' 
defined by $a\,;\,b = -\infty$ if $a= -\infty$ and $b$ otherwise.
Since these operators can be conveniently integrated into \emph{max}-policy iteration, we did not include these 
into our technical exposition.

%

Beyond \emph{bounds} analysis, we might want to identify \emph{relational} dependencies between program variables.
Assume that the set of program variables is given by $X= \{\textbf x_1,\ldots,\textbf x_n\}$ so that
a concrete program state can be represented by a vector $x\in\D^n$ where the value of $\textbf x_i$ is the
$i$th component $x_i$ of $x$.
Here, we briefly discuss the extension of our method, first, to the weakly relational domain of \emph{octagons}
\cite{DBLP:journals/lisp/Mine06} and then, 
to arbitrary template linear constraints domains \cite{DBLP:conf/vmcai/SankaranarayananCSM06}. 

\subsection{Octagon constraints}
\label{ss:octagons}
Octagons generalize intervals in that they not only allow to express ranges of
program variables, but also of the sum and difference of two program variables.
An element of the octagon domain ${\mathbb O}_\D$ thus either equals $\bot$ (signifying the property \textsf{false}) 
or is a satisfiable finite conjunction of inequalities of the form 
$\epsilon_i {\bf x}_i\leq c$ or $\epsilon_i{\bf x}_i + \epsilon_j{\bf x}_j\leq c$
for program variables ${\bf x}_i,{\bf x}_j$, $\epsilon_i,\epsilon_j\in\{1,-1\}$,  and upper bounds $c\in\D$. 
Let us call these inequalities \emph{octagon inequalities}, and their
left-hand sides \emph{octagon linear forms}. 
%
%

As indicated in \cite{DBLP:journals/lisp/Mine06,DBLP:conf/vmcai/BagnaraHZ08,DBLP:conf/sas/SchwarzS23}, each octagon $o$ (over the integers as well as over the rationals)
%
%
can be \emph{normalized} in cubic time into an equivalent octagon $o'$ which either equals $\bot$
or is a finite conjunction of octagon inequalities so that 
for every inequality $\ell\leq c$ in $o$,
$o'$ has an inequality $\ell\leq c'$ with $c'\leq c$.

\paragraph{Octagon normalization.}
Assume we are given a vector $c$ over $\D$ which is indexed by octagon linear forms
where $c[\ell]$ provides the minimum of upper bounds $c$ in octaton inequalities $\ell \leq c$ in $o$.
%
%
Conceptually, normalization can be computed in three stages.
The first stage considers two-variable inequalities only. 
A resulting upper bound $a[\epsilon_i {\bf x}_i+\epsilon_j {\bf x}_j]$ for the two-variable form
$\epsilon'_1 {\bf x}_i +\epsilon'_2 {\bf x}_j$ ($i\neq j$)
is the minimum over all sums
$
\mbox{$\sum_{k=0}^{m-1} c[-\epsilon_{k}{\bf x}_{i_k}+\epsilon_{{k+1}}{\bf x}_{i_{k+1}}]$}
$
where all $\epsilon_k{\bf x}_{i_k}$ are pairwise distinct, $i = i_0, j=i_m$, $\epsilon'_{1} = -\epsilon_{i_0}$
and $\epsilon'_2 = \epsilon_{m}$.
In the second phase, a vector $n$ of upper bounds for one-variable octagon linear forms is obtained by
\[
\begin{array}{lll}
b[\epsilon {\bf x}_i] &=& c[\epsilon {\bf x}_i] \wedge
		\bigwedge_{j\neq i}(a[\epsilon {\bf x}_i + {\bf x}_j] + a[\epsilon {\bf x}_i - {\bf x}_j])/2 \\
n[\epsilon {\bf x}_i]  &=& b[\epsilon {\bf x}_i] \wedge
			\bigwedge_{j\neq i} b[{\bf x}_j] + a[-x_j,\epsilon {\bf x}_i] \wedge
			\bigwedge_{j\neq i} b[-{\bf x}_j] + a[x_j,\epsilon {\bf x}_i]
\end{array}
\]
where division by 2 is interpreted over the integers 
or the rationals, respectively.
Finally, the upper bounds $n[\epsilon_i x_i+\epsilon_j x_j]$ for two-variable octagon linear forms
are possibly improved taking the one-variable upper bounds into account, i.e.,
\[
n[\epsilon_i {\bf x}_i+\epsilon_j {\bf x}_j] = 
a[\epsilon_i {\bf x}_i+\epsilon_j {\bf x}_j] \wedge
(n[\epsilon_i {\bf x}_i] + n[\epsilon_j {\bf x}_j])
\]
Assume that $o_1,o_2$ are normalized, different from $\bot$ and represented by the vectors $c_1,c_2$
respectively. 
Then the vectors $\bar c,\underline c$ representing the least upper bound $o_1\vee o_2$ and greatest lower
bound $o_1\wedge o_2$ are obtained as 
\[
\begin{array}{lll}
\bar c[\ell] &=& c_1[\ell] \vee c_2[\ell] 	\\
\underline 
c[\ell]      &=& c_1[\ell] \wedge c_2[\ell] \qquad(\ell\;\text{an octagon linear form})
\end{array}
\]
--- where, however, the octagon $o_1\wedge o_2$ must subsequently be normalized.
%
%

\paragraph{Octagon analysis.}
%
%
Assume that the program is given by a control-flow graph as in Section \ref{sec:background}
where each control-flow edge from program point $u$ to program point $v$ either is a 
guard or an assignment. As in \cite{DBLP:journals/lisp/Mine06}, we only consider guard expressions $g$ which themselves can be
expressed as an octagon. Then
\begin{equation}
\sem{g?}^\sharp\,o = \textsf{normalize}\,(o\wedge g)
\label{def:guard}
\end{equation}
Also  assignments are considered with unknown (or too complicated) right-hand sides
${\bf x}_i\;{:=}\;?$ (input) or of the form 
${\bf x}_i\;{:=}\,c$ or ${\bf x}_i\,{:=}\,\epsilon {\bf x}_i +c$ (invertible assignments)
or ${\bf x}_i\,{;=}\,\epsilon {\bf x}_j+c$ with $j\neq i$ (noninvertible assignments)
with $\epsilon\in\{1,-1\}$.
For an invertible assignment, 
the octagon after the assignment is obtained from the normalized octagon $o$ before the assignment by
\begin{equation}
\begin{array}{lll}
\sem{{\bf x}_i\,{:=}\,\epsilon{\bf x}_i+c}^\sharp\,o &=& o[(\epsilon{\bf x}_i-\epsilon c)/{\bf x}_i]	
\end{array}
\label{def:invertible}
\end{equation}
i.e., ${\bf x}_i$ in the conjunction $o$ is substituted either with ${\bf x}_i-c$ or $-{\bf x}_i+c$, respectively.
This substitution preserves normalization.
For an input assignment ${\bf x}_i\;{:=}\;?$ all information about the variable ${\bf x}_i$ is lost.
This means that
\begin{equation}
\sem{{\bf x}_i\,{:=}\,?}^\sharp\,o = o|_{-\{{\bf x}_i\}}
\label{def:input}
\end{equation}
where the octagon $o|_{-\{{\bf x}_i\}}$ is obtained from the normalized octagon $o$
by setting the upper bounds for all octagon linear forms containing ${\bf x}_i$ to $\infty$.

For a noninvertible assignment, the octagon after the assignment is obtained 
from the normalized octagon $o$ before the assignment, in two steps. 
First, all inequalities referring to the program variable ${\bf x}_i$ are removed from $o$,
i.e., their upper bounds are set to $\infty$.
Then inequalities corresponding to inequalities corresponding to the 
variable of the left-hand side are introduced. 
\begin{equation}
\begin{array}{lll}
\sem{{\bf x}_i\,{:=}\,c}^\sharp\,o	&=&	\textsf{normalize}\,(({\bf x}_i\leq c)\wedge(-{\bf x}_i\leq -c)\wedge o|_{-\{{\bf x}_i\}})	\\
\sem{{\bf x}_i\,{:=}\,\epsilon{\bf x}_j+c}^\sharp\,o &=& 
			({\bf x}_i-\epsilon{\bf x}_j\leq c)\wedge(\epsilon {\bf x}_j-{\bf x}_i\leq -c)\;\wedge\\
	&&				(o|_{-\{{\bf x}_i\}}\vee
				         o|_{-\{{\bf x}_i\}}[(\epsilon{\bf x}_i-\epsilon c)/{\bf x}_j])	
\end{array}
\label{def:non-invertible}
\end{equation}
In case that ${\bf x}_i$ is assigned a constant $c$, normalization is required to infer 
further two-variable inequalities with ${\bf x}_i$ and all other variables with known upper bounds or lower bounds.
In essence, the restrictions to guards and assignments allow defining 
the abstract semantics without resorting to \emph{linear programming}.

%
\paragraph{The system of equations.}
Subsequently, we put up a system of equations over $\D$ for octagon analysis of a program
given by its control-flow graph.
As unknowns of the system, we use the set of all pairs
$\angl{v,\ell}$, $v$ a program point and $\ell$ an octagon linear form, which is meant to receive the
upper bound for $\ell$ provided by the normalized octagon inferred for program point $v$.
For a given program point $v$, The right-hand sides for the unknowns $\angl{v,\ell}$ are determined 
as the least upper bounds over the contributions of control-flow edges $(u,\emph{act},v)$. 
These contributions are determined according to 
\eqref{def:guard},
\eqref{def:input},
\eqref{def:invertible} or
\eqref{def:non-invertible} (depending on whether \emph{act} is a guard, an input or an invertible or
non-invertible assignment) where the bounds vector of the argument octagon $o$
is retrieved from the values of the unknowns $\angl{u,\ell}$.
Having resolved the joins via a \emph{max}-policy, we are left with a system of equations
using (at least conceptually) minimum, sums and division by 2 only. 
Taking a closer look at the properties of normalization, however, we notice that division by 2 need not be
applied repeatedly. We obtain:

\begin{theorem}\label{t:octagon}
For every equation system $\Sigma$ for octagon analysis either over the integers or over the rationals,
the following holds:
\begin{enumerate}
\item	$\Sigma$ has only concave \emph{max}-iterates;
\item	Value iteration for each encountered \emph{max}-iterate $\Sigma^{\pi_t}$ terminates;
\item	\textbf{Max+Val} determines the least solution.
\end{enumerate}
\end{theorem}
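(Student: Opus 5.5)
The plan is to reduce the octagon system, after a \emph{max}-policy has been fixed, to a system of the kind already covered by Corollary~\ref{c:least} and Theorem~\ref{t:term}. Then (3) follows from (1) and (2) via Lemma~\ref{l:unique}.

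The first step is a structural observation about normalization. In the three-stage description, division by $2$ occurs only in the second stage, where $b[\epsilon{\bf x}_i]$ is computed from the two-variable bounds $a[\cdot]$. The $a[\cdot]$ are minima of sums of input bounds $c[\cdot]$ along paths. The third stage only adds one-variable bounds, which were themselves obtained once via a single halving. Hence every right-hand side of a \emph{max}-iterate $\Sigma^{\pi}$ can be written as a finite minimum of expressions of the form $\sum_k a_k\cdot y_k + b$. Here each $y_k$ is an unknown $\angl{u,\ell}$, $a_k\in\{1,\tfrac12\}$ and $b$ is a constant. In the integer case the halving is a floor $\lfloor\cdot/2\rfloor$ applied to a sum of unknowns, which is monotone and strict.

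The key invariant is that a halved bound is never fed back into another halving within the same normalization. I would check that the transfer functions \eqref{def:guard}--\eqref{def:non-invertible} only substitute, drop (set to $\infty$), or add constant bounds. These operations preserve this shape. Before any normalization is applied, I would first show that every right-hand side is normalization applied to such substituted bound vectors. Over the rationals, the right-hand sides of $\Sigma$ are then maxima of minima of affine expressions. I also claim that it suffices to consider bounded systems: octagon bounds can be capped by a symbolic $M$ as in Section~\ref{sec:infinity}, and Theorem~\ref{t:limit} transfers the result to $\infty$.

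For (1), over the rationals I would invoke the result underlying Corollary~\ref{c:least} for bounded affine rational equations \cite{GawlitzaS2011}. The one point to check is that multiplication by the constant $\tfrac12$ counts as multiplication with a non-negative constant, which it does. Over the integers the system contains $\lfloor\cdot/2\rfloor$, which is not polynomial, so \cite{DBLP:conf/cav/GawlitzaS09} does not apply verbatim. Here I expect the main obstacle. My first attempt is to relate the integer system to its rational relaxation. Octagon normalization over $\mathbb Z$ equals the rational normalization followed by the tightening step. Using the one-halving property, a solution $\underline\rho\ge\rho'_{t+1}$ of $\Sigma^{\pi_{t+1}}$ with $\underline\rho<\rho_{t+1}$ would have to be strictly below the greatest solution along some cycle of the dependency graph. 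On such a cycle every coefficient is $1$, since halving cannot recur along it. The argument in the proof of Lemma~\ref{l:term} then yields a strictly decreasing chain. This forces the value $-\infty$, which is excluded by $\rho'_{t+1}>-\infty$ and reluctant improvement, exactly as in the proof of Theorem~\ref{t:term}.

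For (2), the integer case is Theorem~\ref{t:term} directly. For the rationals, note that after halving has been applied once, the remaining cycles have unit coefficients. Descending value iteration on a cycle of the form $x=\min(x+s,\dots)$ with $s\le 0$ either stabilizes immediately or diverges to $-\infty$. The latter would contradict concavity and boundedness from below by $\rho'_{t+1}$. Hence, after finitely many rounds, each strongly connected component stabilizes, and I would proceed component by component in topological order. Denominators stay in $\tfrac12\mathbb Z$ times the input constants, so the values range over a discrete set, which gives termination. Finally, (3) follows by combining (1), (2) and Lemma~\ref{l:unique}.
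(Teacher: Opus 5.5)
Your reduction has a genuine gap. The per-normalization observation is correct: inside one normalization, a halved bound is never halved again. The load-bearing steps, however, need this along cycles of the dependency graph of $\Sigma^{\pi}$:
\begin{itemize}
\item ``on such a cycle every coefficient is $1$'', used for concavity over $\mathbb Z$;
\item ``the remaining cycles have unit coefficients'' and ``denominators stay in $\tfrac12\mathbb Z$'', used for termination over $\mathbb Q$.
\end{itemize}
The property does not transfer, because every traversal of a loop containing a guard passes through a fresh normalization, applied to bounds that the previous one already halved. Concretely, let a loop head $h$ be followed by an octagonal guard, and let the \emph{max}-policy select the back edge for $\angl{h,x-y}$. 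After the guard, the right-hand side for $x-y$ contains the alternative $n[x]+n[-y]$ with $n[x]\le (c[x+y]+c[x-y])/2$, where $c[x-y]$ is $\angl{h,x-y}$ met with a guard constant. So the cycle through $\angl{h,x-y}$ carries an alternative with coefficient $\tfrac12$ on $\angl{h,x-y}$ itself. If such an alternative stays active, you are locally in the situation $x=\min(\ldots,(x+C)/2)$, which is the shape of \eqref{eq:simple-rat-loop}. There, descending value iteration over $\mathbb Q$ never terminates and denominators grow like $2^k$. As written, neither (2) over $\mathbb Q$ nor your integer argument for (1) is established.

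What is missing is a \emph{semantic} argument that repeated halving never matters. For example,
$\tfrac12(c[x+y]+c[x-y])+c[-y] \ge \min\bigl(c[x-y],\,c[x+y]+2c[-y]\bigr)$,
where both right-hand alternatives are halving-free and already occur in the minimum for $n[x-y]$ (the second via $n[x]\le b[-y]+a[x+y]$). This is what the paper means by ``division by 2 need not be applied repeatedly''. Its proof exploits it by \emph{staging} value iteration, and obtains (1) and (2) together:
\begin{itemize}
\item first the two-variable bounds via the $a$-vectors, a pure $(\min,+)$ system for which value iteration from the top terminates with a unique solution over both $\mathbb Z$ and $\mathbb Q$;
\item then the one-variable bounds from these, using sums, minima and one halving;
\item then the final two-variable bounds, each stage in finitely many rounds.
\end{itemize}

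Two further steps of yours fail independently.
\begin{itemize}
\item Over $\mathbb Z$, two distinct solutions above $\rho'_{t+1}$ need not yield an infinite descending chain. A zero-weight $(\min,+)$ cycle admits a whole interval of solutions with no descent at all. Ruling such cycles out needs the strict-improvement argument behind the known concavity results, not the descent argument of Theorem~\ref{t:term}.
\item Theorem~\ref{t:limit} only identifies the least solution of $\Sigma_\infty$ as computed by \emph{max-min} policy iteration over $\RR^*$. It transfers neither concavity of max-iterates nor termination of \textbf{Max+Val} to unbounded octagon systems.
\end{itemize}
Your argument for (1) over bounded rationals, via the affine result behind Corollary~\ref{c:least}, is fine, as is your appeal to Theorem~\ref{t:term} for (2) over $\mathbb Z$.
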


\begin{proof}[Sketch]
Consider a \emph{max}-iterate $\Sigma^{\pi_{t}}$.
Instead of performing value iteration in one go, 
we first concentrate on the values for the two-variable octagon linear forms according to 
the calculation of the $a$ vector of normalization. 
As only sum and minimum are involved, both for integers or rationals, 
value iteration starting with $\infty$ will terminate with a unique solution. 
From the $b$-bounds of normalization together with these
$a$-bounds, the precise upper bounds of the one-variable octagon linear forms are obtained 
by addition and minimum. Therefore, another
finite number of rounds of value iteration suffice to determine their values.
From these, yet another finite number suffices to
determine the final results for the two-variable octagon linear forms at all program points.
Altogether, value iteration for each \emph{max}-iterate terminates and returns a unique solution.
Then, according to Lemmas \ref{l:term} and \ref{l:unique}, the third claim follows.
\qed
\end{proof}
\noindent
Value iteration in dedicated phases as in the proof of Theorem \ref{t:octagon} is not required. 
Instead, it suffices to perform any fixpoint computation on the control-flow graph starting
with $\infty$ for every unknown $\angl{v,\ell}$, and repeatedly applying the abstract octagon semantics 
until stabilization.

\subsection{Template linear constraints}
\label{ss:linear-templates}

As a further generalization of octagons, \emph{template linear constraint domains} have
been introduced \cite{DBLP:conf/vmcai/SankaranarayananCSM06}. 
%
%
Here, these are considered for bounded exact rational arithmetic only.
%
%
%
A template linear constraint domain 
is defined by a matrix $T \in {\mathbb R}^{m \times n}$.
Each abstract value from this domain is a vector $c\in\RR^m$ representing the set 
$\{x\in\RR^n \mid Tx \leq c, x\leq M, -x\leq M \}$ for some vector
$M\in\RR^n$ of sufficiently large upper bounds.
Thus, each \emph{row} $T_i$ of $T$ corresponds to one linear combination $T_i\, x$ which is bounded by the 
$i$th entry $c_i$ of $c$.
\ignore{
One such domain is the \emph{octagon} domain \cite{DBLP:journals/lisp/Mine06} where the rows $T_i$
consist of all vectors in $\{-1,0,1\}^n$ 
with at most two non-zero entries.
}
Consider a vector assignment $\textbf x\;{:=}\,A\textbf x+b$ in a program, and assume that the set of program states
reaching the assignment is described by the abstract value $c\in\RR^m$.
%
%
%
Then the vector $c' = [c'_1,\ldots,c'_m]^\top\in\RR^m$ optimally describing the set of program states after the
assignment are defined by
\begin{equation}
  {\small\begin{array}{lll} 
  c'_i &=&  |T_i|\,M\;\wedge\;
  (T_i\, b + \bigvee \{ (T_i\, A)\, x\;\mid 
          T\,x \leq c, x\leq M, -x\leq M \})
  \end{array}}
  %
  \label{eq:template-assign}
\end{equation}
with 
$|T_i| = [|t_{i1}|,\ldots,|t_{in}|]$ for
$T_i = [t_{i1},\ldots,t_{in}]$. 
Thus, each entry $c'_i$ is introduces one linear optimization problem.
Similarly for a guard $(A\textbf x\leq b)?$, we obtain the optimal bound $c'$ for the template $T$ 
after executing the guard as
\begin{equation}
  {\small c'_i = \bigvee \{ T_i\, x \mid T\,x \leq c, A\,x\leq b, x\leq M, -x\leq M \}}
  \label{eq:template-guard}
\end{equation}
All these optimization problems 
can be combined into one big optimization problem whose unknowns consist of all pairs
$\angl{u,i}$, $u$ a program point, $i\in\{1,\ldots,m\}$ ($m$ the number of rows in $T$) \cite{GawlitzaS2011}.
If $N$ is the number of program points, this optimization problem thus uses $N\cdot m$ variables.
What we obtain in this way, is a version of \textbf{Max+LP} for the template linear constraints domain.

\ignore{
%
%
Likewise, we can extend the algorithm \textbf{Max+Val} to the affine template constraints.
During value iteration, we would
Run a floating-point LP solver to solve the LP problems \eqref{eq:template-assign} and \eqref{template-guard}
whenever they are encountered. Instead of solving one huge optimization problem per encountered \emph{max}-policy, 
this algorithm solves many LP problems, each, however, with $m$ variables only.
}
To extend
the algorithm \textbf{Max+Min} to the linear template constraints domain,
we consider the respective \emph{dual} optimization problems for
\eqref{eq:template-assign} and \eqref{eq:template-guard}.
For a control-flow edge with the assignment \eqref{eq:template-assign},
from some program point $u$ to some program point $u'$ with no further incoming edge, 
the equations for the unknowns $\angl{u',i}$ take the form
\begin{equation}
{\small\begin{array}{lll}
  \angl{u',i} 	&=& |T_i| M\wedge (T_i\, b\; +\;
  		\bigwedge \{\lambda^\top (u\times\{1,\ldots,m\}) 
			+ (\lambda'+\lambda'')^\top\,M\; \mid	\\  
		& &	\lambda,\lambda',\lambda''\geq 0, 
			\lambda^\top\, T+(\lambda'-\lambda'')^\top = T_i\, A\})
		%
		%
\end{array}}
  \label{eq:template-assign-dual}
\end{equation}
where $u\times\{1,\ldots,m\}$ is the column vector $[\angl{u,1},\ldots,\angl{u,m}]^\top$.
Note that the polyhedron $P_i$ defined by 
$P_i = \{	
	\lambda,\lambda',\lambda''\geq 0\mid
	\lambda^\top\, T+(\lambda'-\lambda'')^\top = T_i\, A\}$
is \emph{independent} of the unknowns from the system.
Let $V_i = \{[v,v'_1,v''_1],\ldots,[v_r,v'_r,v''_r]\}$ denote an enumeration of the vertices of $P_i$. 
Each $[v_j,v'_j,v''_j]\in V_i$
represents one possible choice of a \emph{min}-policy for the left-hand side unknown $\angl{u',i}$.
For that choice, the equation \eqref{eq:template-assign-dual}
turns into 
\[
{\small\begin{array}{lll}
\angl{u',i} &=& T_i\, b + v_j^\top\, (u\times\{1,\ldots,m\}) + (v'_j+v''_j)^\top\,M 
\end{array}}
\]
Given a finite solution $\rho$ of this system of equations, 
the optimization problem given by the right-hand sides of \eqref{eq:template-assign-dual}
can be solved to check whether $\rho$ already provides the optimal solution.
If this is not case, a 
vertex in $V_i$
can be found which for which a strictly smaller value can be computed. That vertex 
then represents the improved \emph{min}-policy for $\angl{u',i}$. 
%
This algorithm requires to solve many optimization problems each, however, with $m+2n$ variables only.
\ignore{
Additionally, during one iteration, the current \emph{min}-policy may be improved independently in many places --- 
thus causing \emph{min}-policy iteration potentially reaching an optimum very fast.
}
%
%
%

%
%
\paragraph{Witnesses.}
%
%
If optimization problems must be solved when evaluating right-hand sides,
it is no longer trivial to verify that a given valuation $\rho$ is indeed a solution.
For each optimization sub-problem, however, in its dual form, optimality can be certified by additionally
providing a vertex where the optimal value is attained.
Likewise, when trying to improve the current \emph{max}-policy, the right-hand sides must be evaluated
to check whether a more profitable policy is at hand. Thus, given some valuation $\rho_t$ attained during
\emph{max}-policy iteration, extra optimality certificates are required to justify 
the claimed construction of $\rho'_{t+1}$, and likewise optimality certificates to justify the proposed 
next valuation $\rho_{t+1}$.

\section{Related Work}\label{sec:related}
\emph{Policy iteration} or \emph{strategy iteration} is a well-known approach for solving games \cite{Puterman_1994}.
Yet results and algorithms from game theory, in general, cannot be directly used in the context of program analysis because game theory works over transition (sub-)probability  matrices, while we work with quite different kinds
of mappings.
Plain value iteration, for instance, is commonly used for analyzing probabilistic systems where
iterations are stopped when the result does no longer seem to change much \cite{kwiatkowska:hal-00740112}.
\ignore{
\begin{quote}
The sequence of vectors $p^{\max,k}$ is guaranteed to converge eventually to $p^{\max}$. In practice, though, the computation is terminated when a pre-speciﬁed convergence criterion is met. One common approach is to check that the maximum (absolute) difference between the corresponding elements of successive vectors is below some ﬁxed threshold [\dots] Another is to check the maximum relative difference.
\end{quote}}
In contrast, value iteration for the static analysis of numerical programs is known to be impractical in many
important cases.
If (upward) value iteration, for instance, is applied to bounds analysis of the loop in Section \ref{sec:background},
the potentially long sequence of candidate upper bounds $0, 1,\ldots,N$ for variable $i$ at the loop head is attained.
In case that $N=\infty$, it even may fail to terminate.
\ignore{
that just increases an unbounded integer variable initialized to $0$, then one gets the intervals 
$[0,0]$,$[0,1]$,$[0,2]$\dots and the analysis never terminates.
}
Stopping the iteration early, however, would produce a bound that is not an inductive invariant.
\ignore{
Even when convergence within a finite number of iterations would be ensured by the boundedness of variables, it is in general impractical --- one does not want to iterated $2^{31}$ times, let alone $2^{63}$ times, to conclude that an integer variable can take any nonnegative value.
}
This motivated the introduction of widening operators to enforce termination with at least some inductive invariant 
\cite{Cousot77,DBLP:journals/logcom/CousotC92}.

The idea of applying policy iteration to abstract interpretation of numerical programs
goes back to the seminal work of Costan et al. \cite{DBLP:conf/cav/CostanGGMP05}
There, however, iteration is on \emph{min}-policies only. 
This is conceptually simpler, since each choice of an argument at a minimum results in
an \emph{over-approximation} of the original system;
iterations may be stopped at any time, providing \emph{some} solution of the system; even if iterations eventually 
terminate, there is in general no indication that the returned solution is least. 
Techniques for solving the systems without minima were not presented.
Instead, these ideas were generalized to linear
\cite{DBLP:conf/esop/GaubertGTZ07} or non-linear \cite{AdjeGG2012,adje:hal-00940804,Roux_PhD,DBLP:journals/fmsd/RouxVS18}
template-based domains. 
In these generalizations, linear or convex programming were 
employed to determine least solutions of the simplified systems.
Only in some cases, depending on the nonlinear spectral radius of some operator, 
results can be guaranteed to be minimal~\cite{adje:hal-00940804}.

\emph{Max}-policy iteration, on the other hand, was pioneered by Gawlitza \& Seidl \cite{GawlitzaS2007}.
They observed that this iteration together with plain value iteration 
for the simplified systems provides an efficient method for 
determining precise systems of polynomial integer equations  \cite{GawlitzaS2013}.
\emph{Max}-policy iteration then was applied to template linear constraints domains 
over the rationals \cite{DBLP:conf/csl/GawlitzaS07,GawlitzaS2011,GawlitzaS2013} as well as to convex relaxations of 
nonlinear template domains \cite{Gawlitza_Seidl_FMSD2013}.
Later, \emph{max}-policy iteration was combined with SMT solvers and succinct representations of sets 
to deal with programs where policies are only implicitly provided as paths through a program
\cite{DBLP:journals/corr/abs-1209-0643,DBLP:conf/sas/MonniauxS14}.
%
In all three extensions,
mathematical optimizers were employed to solve 
the encountered \emph{max}-iterates.
%
These optimizers were assumed to return mathematically precise results.
Implementation with floating-point numbers were not considered, except when convex 
relaxations were used to deal with nonlinear constraints 
\cite{Gawlitza_Seidl_FMSD2013,AdjeGG2012}. 
Roux \cite{Roux_PhD} gives a detailed account of convex relaxations in the context 
of \emph{min}- and \emph{max}-policies.
\emph{Max}-policy iteration, however, combined with \emph{min}-policy iteration for inferring numerical
program invariants has not yet been considered. Likewise, combining \emph{max}-policy iteration with
(downward) value iteration for floating-point invariants is new. 
The same holds true for our application of \textbf{Max+Val} to obtain a precise octagon analysis
over integers as well as over rationals.

\ignore{
\emph{Parametric} \emph{max}-policy iteration for systems of integer affine systems of equations
were considered in \cite{Seidl_Gawlitza_SCSS14}. There, the concepts of assignments to unknowns 
as well as of policies were generalized to decision diagrams with affine inequalities between
parameters as internal nodes. Lazy parametric policy iteration has not yet been considered, 
neither certificates on minimality for systems of affine rational equations.
}

\section{Conclusion}\label{sec:conclusion}

\emph{Max-min} policy iteration  is a promising novel technique for computing least solutions of systems of equations 
that express numerical inductive invariants in domains such as the intervals or, more generally, 
linear template constraints domains. 
Its running-time is affected by the cost of solving sparse systems of affine equations and 
the number of encountered \emph{max}- and \emph{min}-policies. 
In theory, these numbers could be exponential in the number of \emph{maximum} and \emph{minimum} operators 
occurring in the system, although we never observed that in our examples. 

Over the integers and over floating-point numbers,
\emph{max}-policy iteration can also be supplemented by value iteration to get invariants.
For integers, this results in a novel method which increases expressiveness beyond the known approach from 
\cite{GawlitzaS2007} by removing any restriction onto the set of used operators.
Over floating-point numbers, it vastly outperforms 
\emph{max}-policy iteration combined either with linear programming or \emph{min}-policy iteration.

We also clarified how to provide certificates of correctness 
and also of optimality.
Certificates should be verifiable by simple, local computations, which we made 
formal through the complexity class \NCone. We showed that, under the usual conjectures of 
complexity theory, it is insufficient to give only the computed invariants as a certificate 
of optimality. Instead, our certificate records central elements from the execution trace,
namely, the sequence of encountered \emph{max}-policies together with the corresponding valuations.

As future work, we would like to improve the efficiency of the nested \emph{min}-policy iteration 
by incrementally solving only marginally modified systems of affine equations, and
apply of our methods to enhance the precision of standard abstract interpretation based analyzers.
We envision a proof of fast convergence of the value iterations in \textbf{Max+Val}, based on spectral considerations. 
%
We also envision to exploit 
our certificates for solving \emph{parametric} systems of equations,
i.e., systems depending on external parameters.
Such systems allow representing \emph{transformations}
from parameter settings to analysis results \cite{Seidl_Gawlitza_SCSS14,DBLP:conf/sas/MonniauxS14}.

\paragraph*{Acknowledgement.} We are very indepted to the anonymous reviewers for their helpful comments 
and valuable suggestions of improvements.

\paragraph*{Data-Availability Statement.}
The artifact, including source code for our implementations, benchmarks and the evaluation scripts,
is archived and available at \href{https://dx.doi.org/10.5281/zenodo.18134107}{\textsf{10.5281/zenodo.18134107}}. 

\ignore{
\begin{credits}

\subsubsection{\discintname}
The authors have no competing interests to declare that are
relevant to the content of this article. 
\end{credits}
}
%
%

\bibliographystyle{splncs04}
\bibliography{min_max_without_lp}

\end{document}